\documentclass{amsart}
\usepackage[mathscr]{eucal}
\usepackage{amsmath,amssymb,hyperref}
\usepackage{amstext}
\usepackage{graphics}
\usepackage{xcolor}
\usepackage{mathrsfs}
\usepackage{epstopdf}

\usepackage{tikz}
\usetikzlibrary{matrix}

\usepackage[all]{xy}

\usepackage[T1]{fontenc}
\usepackage[utf8]{inputenc}

\newtheorem{thm}{Theorem}[section]
\newtheorem{THM}{Theorem}

\newtheorem{cor}[thm]{Corollary}
\newtheorem{prop}[thm]{Proposition}

\newtheorem{lemma}[thm]{Lemma}

\theoremstyle{definition}

\newtheorem{definition}[thm]{Definition}
\newtheorem{remark}[thm]{Remark}
\newtheorem{example}[thm]{Example}

%\newtheorem*{lemma}{Lemma}

%%%%%%%%%%%%%%%%%%%%%%%%%%%%%%%%%%%%%%%%%%%%%%%%%%%%%%%%%%%%%%%%%%%%%%%%
% New Commands
%%%%%%%%%%%%%%%%%%%%%%%%%%%%%%%%%%%%%%%%%%%%%%%%%%%%%%%%%%%%%%%%%%%%%%%%

\def\X0{X^{\circ}}

\def\Y0{Y^{\circ}}

\input xy
\xyoption{all}

\addtocounter{section}{0}             % Start with section 1
\numberwithin{equation}{section}       % Number formulas within sections

\begin{document}

\title[Positive Characteristic Darboux-Jouanolou Integrability of Differential Forms]
{Positive Characteristic Darboux-Jouanolou Integrability of Differential Forms}

%\author[J.V. Pereira]{Jorge Vit\'{o}rio PEREIRA}
%\address{IMPA, Estrada Dona Castorina, 110, Horto, Rio de Janeiro,
%Brasil}
%\email{jvp@impa.br}

\author[E. A. Santos]{Edileno de Almeida SANTOS}
\address{Instituto de Ciência, Engenharia e Tecnologia (ICET) - Universidade Federal dos Vales do Jequitinhonha e Mucuri (UFVJM), Campus do Mucuri, Teófilo Otoni - MG,
Brazil}
\email{edileno.santos@ufvjm.edu.br}

\author[S. Rodrigues]{Sergio RODRIGUES}
\address{Faculdade de Ciências Exatas e Tecnologia (FACET) - Universidade Federal da Grande Dourados (UFGD), Rodovia Dourados - Itahum, Km 12 - Cidade Universitátia, Dourados - MS,
Brazil}
\email{sergiorodrigues@ufgd.edu.br}

\subjclass{37F75} \keywords{Positive Characteristic, Vector Fields, Codes}

%\thanks{The author would like to thanks J. V. Pereira for incentive and valuable conversations.}

\begin{abstract}

We prove a Darboux-Jouanolou type theorem on the algebraic integrability of polynomial differential $r$-forms over arbitrary fields ($r\geq 1$). We also investigate the Darboux's method for producing integrating factors.% A general algebraic version of Poincaré's Lemma on fields of any characteristic  is formulated and proved.

\end{abstract}

\maketitle

\setcounter{tocdepth}{1}
\sloppy
%%\tableofcontents

%We would like to thanks J. V. Pereira for suggest us the reference \cite{Brion} (where appears the Cartier Operator); in this paper, our exposition was heavily influenced by his excellent monograph \cite{JVP2}, where the Darboux-Jouanolou theory is appreciated on $\mathbb C^2$.

%\section{Darboux-Joaunolou Integrability for Arbitrary Fields}

%%%%%%%%%%%%%%%%%%%%%%%%%%%%%%%%%%%%%%%%%%%%%%%%%%%%%%%%%%%%%%%%%%%%%%%%%%%%%%%%%%%%%%%%%%%%%%%%%%%%%%%%%%%%%%%%%%%%%%%%%%%%%%
\section{Introduction}
In his seminar memoir \cite{Darboux}, G. Darboux (1878) showed the fascinating relationships between integrability and the existence of algebraic solutions for a planar polynomial differential system. The classical approach of Darboux shows that, for a planar polynomial vector field of degree $d$, on $\mathbb R^2$ or $\mathbb C^2$, from $\binom{d+1}{2}+1$ invariant algebraic curves we can deduce an analytic (possibly multivaluated) first integral. This method was studied by H. Poincaré (1891) in \cite{Poincare}, where he observes the difficulty to obtain algorithmically invariant curves.

A general result of J.-P. Jouanolou (1979)in \cite{Jouanolou} shows that if $K$ is a field of characteristic $0$ and $\omega$ is a polynomial $1$-form of degree $d$ on $K^n$ admitting at least $\binom{d-1+n}{n}\cdot \binom{n}{2}+2$ irreducible invariant  algebraic hypersurfaces, then $\omega$ has a rational first integral, computed in terms of the invariant hypersurfaces.

The {\it Darboux Integration Method} have been successfully used in Physics (see \cite{Hewitt}, \cite{Llibre1}, \cite{Llibre2}, \cite{Valls}, \cite{Zhang}). For example, C. G. Hewitt (1991) in \cite{Hewitt} study some new solutions to the Einstein field equations.

For holomorphic foliations on compact complex manifolds, E. Ghys (2000) in \cite{Ghys} gives an extended version of Jouanolou's Theorem. Over fields of characteristic zero, a Darbou-Jouanolou type theorem is proved in \cite{Correa} for polynomial differential $r$-forms, $r\geq 1$. % and B. Scárdua (2011) showed in \cite{Scardua} a local Darboux-Jouanolou type theorem for germs of integrable $1$-forms on $(\mathbb C^n,0)$

In positive characteristic, M. Brunella and M. Nicollau (1999) proved in \cite{B} that if $\omega$ is a rational $1$-form on a smooth projective variety over an algebraically closed field $K$ of positive characteristic $p>0$ with infinitely many invariant hypersurfaces, then $\omega$ admits a rational first integral.%The proof uses different tools of algebraic geometry and it is not constructive.

In sharp contrast with the characteristic $0$ case, where a theorem of Jouanolou says that a generic vector field on the complex plane does not admit any invariant algebraic curve, J. V. Pereira (2001) shows in \cite{JVP} that a generic vector field on an affine space of positive characteristic admits an invariant algebraic hypersurface (the generic condition is that the divergent of the vector field must be zero).

Our goal in this paper is to extend results of \cite{Correa} ($r$-forms over characteristic $0$, $r>0$) and \cite{Santos} ($1$-forms over arbitrary fields) giving a general account of Darboux-Jouanolou integrability for polynomial $r$-forms on $K^n$ ($r\geq 1$), where $K$ is an arbitrary field. We obtain a general version of {\it Darboux-Jouanolou Criterion} (\cite{Jouanolou},  Théorème 3.3, p. 102): for each polynomial $r$-form $\omega$ of degree $d$ on $K^n$, with $K$ of characteristic $p\geq 0$, we define a natural number $N_K(n,d,r)$ (see Definition \ref{D:NK}) that depends only on $n$, $r$, $d$ and $p$, and we prove the following theorem.

\begin{THM}[Theorem \ref{T:Main}]
%\label{Main}
Let $\omega\in \Omega^r_d(K^n)$ be a polynomial $r$-form of degree $d$ over an arbitrary field $K$ and $N=N_K(n,d-1,r+1)$. If $\omega$ has $N+r+1$ invariant hypersurfaces, then $\omega$ admits a rational first integral.
\end{THM}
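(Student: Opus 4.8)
The plan is to run the Darboux--Jouanolou machine in the form-valued setting, tracking the two places where positive characteristic forces the enlargement of the bound encoded in $N_K$. Write the $N+r+1$ invariant hypersurfaces as $\{f_1=0\},\dots,\{f_{N+r+1}=0\}$ with the $f_i$ irreducible and pairwise non-associate. First I would record the cofactor description of invariance: $\{f_i=0\}$ is $\omega$-invariant precisely when $df_i\wedge\omega=f_i\,\theta_i$ for a polynomial $(r+1)$-form $\theta_i$, and a degree count shows $\theta_i\in\Omega^{r+1}_{d-1}(K^n)$. Dividing by $f_i$, each invariant hypersurface thus produces a closed logarithmic $1$-form whose wedge with $\omega$ is the polynomial cofactor, i.e. $\tfrac{df_i}{f_i}\wedge\omega=\theta_i$.

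Next I would introduce the cofactor map $\Phi\colon K^{N+r+1}\to W$, $\Phi(\lambda)=\sum_i\lambda_i\theta_i$, where $W$ is the subspace of $\Omega^{r+1}_{d-1}(K^n)$ spanned by the cofactors, and the point of Definition \ref{D:NK} is that $\dim_K W\le N=N_K(n,d-1,r+1)$ (this upper bound, and its dependence on $p$, is exactly what Definition \ref{D:NK} is engineered to supply). Since the source has dimension $N+r+1$ and the target has dimension at most $N$, the kernel of $\Phi$ has dimension at least $r+1$; I would pick linearly independent relations $\lambda^{(1)},\dots,\lambda^{(r+1)}\in\ker\Phi$. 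Each yields a closed logarithmic $1$-form $\Lambda_j=\sum_i\lambda^{(j)}_i\,\tfrac{df_i}{f_i}$ satisfying $\Lambda_j\wedge\omega=\Phi(\lambda^{(j)})=0$. Because the logarithmic forms $\tfrac{df_i}{f_i}$ attached to distinct irreducible polynomials are $K$-linearly independent (compare residues along the divisors $\{f_i=0\}$), the assignment $\lambda\mapsto\Lambda_\lambda$ is injective and $\Lambda_1,\dots,\Lambda_{r+1}$ are $K$-linearly independent.

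The geometric input is that a nonzero $r$-form has a small wedge-annihilator: over the field of rational functions $K(x_1,\dots,x_n)$ the space $\{\Lambda:\Lambda\wedge\omega=0\}$ has dimension at most $r$ (with equality exactly when $\omega$ is decomposable). Hence the $r+1$ forms $\Lambda_j$, although $K$-independent, must be $K(x_1,\dots,x_n)$-linearly dependent, and any such dependence necessarily involves non-constant coefficients. I would then feed this configuration into the logarithmic extraction step: from a minimal relation $\sum_jR_j\Lambda_j=0$ with $R_j\in K(x_1,\dots,x_n)$ one applies $d$ and uses $d\Lambda_j=0$ to obtain $\sum_j dR_j\wedge\Lambda_j=0$, and solving this yields a non-constant rational function $F$ whose differential lies in the $K(x_1,\dots,x_n)$-span of the $\Lambda_j$. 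Since each $\Lambda_j\wedge\omega=0$, this gives $\omega\wedge dF=0$, so $F$ is the sought rational first integral.

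The main obstacle is precisely this last extraction step in characteristic $p$. In characteristic $0$ the passage from a rational-function dependence among logarithmic forms to a first integral is classical: two proportional closed logarithmic forms have a rational quotient that is automatically a first integral. In characteristic $p$ two phenomena intervene: the residues $\lambda^{(j)}_i$ lie in $K$ rather than in $\mathbb Z$, and, more seriously, the naive first-integral candidate can turn out to be a $p$-th power, so that $dF=0$ and $F$ records nothing about $\omega$. Ruling this out is exactly what dictates the $p$-dependent inflation of $N_K(n,d,r)$: one must guarantee enough independent relations to produce a genuinely non-degenerate $F$ (with $dF\neq0$), or, after factoring out inseparable $p$-th powers, to reduce to the $1$-form statement of \cite{Santos}. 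I therefore expect the heart of the proof to be a characteristic-$p$ refinement of the logarithmic extraction lemma, with the dimension bookkeeping of Definition \ref{D:NK} ensuring that its hypotheses are met.
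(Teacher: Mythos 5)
Your proposal collapses at the dimension count, which is exactly where the positive-characteristic structure cannot be sidestepped. You define the cofactor map $\Phi\colon K^{N+r+1}\to W$ with coefficients in $K$ and assert $\dim_K W\le N=N_K(n,d-1,r+1)$, attributing this to Definition \ref{D:NK}. That is not what the definition supplies: $N$ is by definition $\dim_{K(z^p)}\bigl(\Omega^{r+1}_{\mathcal L_{d-1}/K(z^p)}\bigr)$, a dimension over the field of $\partial$-constants $K(z^p)$, and it satisfies $N\le \dim_K\Omega^{r+1}_{d-1}(K^n)$; indeed in characteristic $p$ one has $N\le\binom{n}{r+1}p^n$ \emph{independently of $d$}, so for large $d$ the $K$-vector space $\Omega^{r+1}_{d-1}(K^n)$ containing the cofactors has $K$-dimension far larger than $N$. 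Consequently the $N+r+1$ cofactors can perfectly well be $K$-linearly independent, $\ker\Phi=0$, and your construction produces no logarithmic forms at all. The paper's linear algebra is instead carried out over $K(z^p)$: the cofactors are viewed inside $\Omega^{r+1}_{\mathcal L_{d-1}/K(z^p)}$, any $N+1$ of them are $K(z^p)$-linearly dependent, and the relations have $\partial$-constant coefficients $\lambda_i\in K(z^p)$. The forms $\sum_i\lambda_i\,dF_i/F_i$ are still closed (since $d\lambda_i=0$), and their linear independence is not the classical residue argument you invoke but Lemma \ref{L:Logaritmicas}, the positive-characteristic Jouanolou lemma, which rests on the residue formula of Lemma \ref{L:Residuo}; the paper also staggers the relations over windows $F_s,\dots,F_{s+N}$, $s=1,\dots,r+1$, with leading coefficient $\lambda^s_s\neq 0$, precisely so that this lemma applies to both $\{\eta_1,\dots,\eta_r\}$ and $\{\eta_2,\dots,\eta_{r+1}\}$.

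The second gap is the one you flag yourself and then defer: extracting a first integral $f$ with $df\neq 0$. You correctly note that the naive candidate may be a $\partial$-constant, but you leave the remedy as an expected ``characteristic-$p$ refinement of the logarithmic extraction lemma.'' That refinement is the actual content of the paper's Lemmas \ref{L:Dependence} and \ref{L:Independence}, and both work only because the $\eta_j$ are independent over $K(z^p)$, not merely over $K$: in Lemma \ref{L:Dependence}, writing $\eta_{m+1}=\sum_i f_i\eta_i$ over $K(z)$, the $K(z^p)$-independence forces some coefficient $f_{j_0}\notin K(z^p)$, i.e.\ $df_{j_0}\neq 0$, and that $f_{j_0}$ is the first integral; in Lemma \ref{L:Independence}, when the ratio $f=h_1/h_2$ of the two proportionality factors from Lemma \ref{L:Base} happens to lie in $K(z^p)$, one replaces $\eta_1$ by the closed tangent form $\mu=\eta_1+(-1)^r f\eta_{r+1}$ and reduces to Lemma \ref{L:Dependence}. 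With only $K$-linear independence, as in your setup, every coefficient of the $K(z)$-dependence relation could be a $\partial$-constant and the argument would terminate with $dF=0$, yielding nothing. In short, your outline reproduces the characteristic-zero proof (essentially that of \cite{Correa}), and the two steps that constitute the theorem's positive-characteristic content --- relations over $K(z^p)$ certified by Definition \ref{D:NK}, and the $\partial$-constant-proof extraction lemmas --- are missing.
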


The number $N_K(n,d,r)$  has the property that in characteristic $0$ holds $N_K(n,d,r)= \binom{d+n}{n}\cdot \binom{n}{r}$ but in characteristic $p>0$ we have only $N_K(n,d,r) \leq \binom{d+n}{n}\cdot \binom{n}{r}$.
%\begin{thm}\label{Main}
%Let $\mathcal P$ be the set prime numbers. There is a function
%$$
%\Delta:(\{0\}\cup \mathcal P)\times \mathbb N\times \mathbb N \rightarrow \mathbb N
%$$
%$$
%(p,n,d)\mapsto \Delta (p,n,d)
%$$
%such that
%$$
%0<p_1<p_2\leq d\Rightarrow \Delta (p_2,n,d)<\Delta (p_1,n,d)<\Delta (0,n,d)
%$$
%and if $K$ is a field of characteristic $p$ and $\omega\in \Omega^1(K^n)$ is a polynomial $1$-form of degree $d$ admitting $\Delta(p,n,d)$ invariant irreducible polynomials, then $\omega$ admits a rational first integral.
%\end{thm}

Another approach is to look at {\it integrating factors}, that is, a function $f$ for a given  $r$-form $\omega$ such that $d(f\omega)=0$. In this direction we obtain the result below.

\begin{THM}[Theorem \ref{T:Main2}]
Let $\omega\in \Omega^r_{K[z]/K}$ be a polynomial $r$-form of degree $d$ and $N=N_K(n,d-1,r+1)$. If $\omega$ admits $N+r$ invariant hypersurfaces, then $\omega$ has a rational first integral or a rational integration factor.
\end{THM}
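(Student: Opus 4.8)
The plan is to prove Theorem \ref{T:Main2} by reducing it to the already-available first-integral result, Theorem \ref{T:Main}. The key observation is that Theorem \ref{T:Main} requires $N+r+1$ invariant hypersurfaces to force a rational first integral, whereas here I am given only $N+r$ of them, one short. So the strategy is: either this deficit of one can be absorbed and a first integral still appears, or else the single missing hypersurface is exactly what an integrating factor supplies. I would set $F_1,\dots,F_{N+r}$ to be irreducible polynomials cutting out the given invariant hypersurfaces, and for each one write the logarithmic derivative relation $dF_i \wedge \omega = \theta_i \wedge \omega$ for an appropriate cofactor form (or in the $r=1$ language, $\omega \wedge dF_i = F_i\,\eta_i$), which is the algebraic meaning of invariance.

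\medskip

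First I would recall the linear-algebra heart of the Darboux--Jouanolou mechanism. The number $N = N_K(n,d-1,r+1)$ is, by Definition \ref{D:NK}, essentially the dimension of the relevant space of polynomial coefficients into which the cofactors $\theta_i$ (built from the $d\log F_i$) land. With $N+r$ invariant hypersurfaces I obtain $N+r$ such cofactor data; after quotienting by the $r$-dimensional ambiguity inherent in the construction (the same ambiguity that explains the ``$+r$'' and ``$+r+1$'' offsets in the two theorems), these give $N$ vectors in an $N$-dimensional space. The decisive step is to test whether these $N$ cofactor vectors are linearly \emph{dependent} or \emph{independent}. This dichotomy is precisely what separates the two conclusions.

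\medskip

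If the $N$ cofactor vectors are linearly dependent, then there is a nontrivial $K$-linear relation $\sum_i \lambda_i \theta_i = 0$. Exponentiating this relation in the usual Darboux fashion produces a closed logarithmic form $\eta = \sum_i \lambda_i\, d\log F_i = d\log\!\big(\prod_i F_i^{\lambda_i}\big)$ with $\eta \wedge \omega = 0$; since $\eta$ is closed and has $\omega$ as a factor up to the multivalued primitive $\prod F_i^{\lambda_i}$, this is exactly a rational integrating factor for $\omega$ (after clearing the exponents to make them integers, or passing to the relevant $p$-power in positive characteristic so that $\prod F_i^{\lambda_i}$ becomes rational). If instead the $N$ vectors are linearly independent, then adjoining any one further invariant hypersurface would overdetermine the system; but the independence itself, combined with the counting, means I can run the argument of Theorem \ref{T:Main} verbatim on the enlarged configuration and extract the rational first integral. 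Concretely, independence of all $N$ cofactors forces a syzygy among the $d\log F_i$ that is not merely closed but exact, yielding $d(\prod F_i^{\mu_i}) = 0$ for suitable exponents, i.e.\ a rational first integral.

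\medskip

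I expect the main obstacle to be the careful bookkeeping of the index offsets together with the positive-characteristic subtleties. In characteristic $p$, a closed logarithmic form need not be locally exact, and $d\log F = dF/F$ can vanish when $F$ is a $p$-th power, so the passage from a linear dependence among cofactors to an honest rational function $\prod F_i^{\lambda_i}$ (either as a first integral or as an integrating factor) must be handled through the Frobenius: I will likely need to replace each $F_i$ by a suitable power so that the offending exponents become admissible, which is exactly the mechanism by which $N_K(n,d,r)$ can be strictly smaller than $\binom{d+n}{n}\binom{n}{r}$. Reconciling the two thresholds $N+r$ and $N+r+1$ so that the ``dependent'' branch lands on the integrating-factor side and the ``independent'' branch lands on the first-integral side, without losing a hypersurface in the accounting, is the delicate point; everything else is the standard Darboux exponentiation argument carried out over $K$.
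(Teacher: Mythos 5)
Your proposal does not prove the theorem; the gap is structural, not bookkeeping. For an $r$-form with $r>1$, the whole point of the paper's argument is to manufacture $r$ \emph{distinct} logarithmic $1$-forms tangent to $\omega$, and your proposal never does this. Concretely, the paper takes the $N+r$ cofactors $\Theta_{F_i}\in\Omega^{r+1}_{\mathcal L_{d-1}/K(z^p)}$ and, for each $s=1,\dots,r$, extracts from the window $\Theta_{F_s},\dots,\Theta_{F_{s+N}}$ (these are $N+1$ vectors in an $N$-dimensional $K(z^p)$-vector space) a nontrivial relation $\sum_{j=s}^{s+N}\lambda^s_j\Theta_{F_j}=0$ with $\lambda^s_j\in K(z^p)$, hence a logarithmic $1$-form $\eta_s=\sum_{j=s}^{s+N}\lambda^s_j\,dF_j/F_j$ satisfying $\omega\wedge\eta_s=0$. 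These $\eta_1,\dots,\eta_r$ are $K(z^p)$-linearly independent by the positive-characteristic Jouanolou Lemma (Lemma \ref{L:Logaritmicas}), and the operative dichotomy is on $\tau=\eta_1\wedge\cdots\wedge\eta_r$: if $\tau=0$, Lemma \ref{L:Dependence} produces a rational first integral; if $\tau\neq 0$, Lemma \ref{L:Base} gives $\omega=h\cdot\tau$, and since $\tau$ is closed, $f=1/h$ satisfies $d(f\omega)=d\tau=0$, i.e.\ $f$ is an integrating factor. Your dichotomy --- ``the cofactors are dependent'' versus ``independent'' --- is not even well posed: $N+r$ vectors in an $N$-dimensional space are \emph{always} dependent, and the ``quotienting by the $r$-dimensional ambiguity'' you invoke to reduce to $N$ vectors in an $N$-dimensional space is not an actual construction.

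Both of your branches also fail individually. In the ``dependent'' branch, a single relation $\sum_i\lambda_i\Theta_{F_i}=0$ yields only \emph{one} $1$-form $\eta$ with $\omega\wedge\eta=0$; when $r>1$ this neither factors $\omega$ nor produces $d(f\omega)=0$ for any rational $f$, so it is not ``exactly a rational integrating factor.'' Even for $r=1$ your identification is off: there the integrating factor would be $1/g$ where $\omega=g\eta$, not $\prod_i F_i^{\lambda_i}$, and in the classical Darboux dichotomy a dependence among cofactors gives the \emph{first integral}, while the integrating factor arises when $d\omega$ lies in the span of the cofactors, $\sum_i\lambda_i\Theta_{F_i}=d\omega$ (compare Theorem \ref{T:Main4}); your assignment of conclusions to branches is reversed. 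In the ``independent'' branch the argument is vacuous: Theorem \ref{T:Main} needs $N+r+1$ invariant hypersurfaces and you have only $N+r$, so there is no ``enlarged configuration'' on which to run it verbatim, and the claim that independence ``forces a syzygy among the $d\log F_i$ that is not merely closed but exact'' is asserted without any mechanism. Finally, the linear relations must be taken with coefficients in $K(z^p)$ (the $\partial$-constants), not in $K$: this is what the dimension count $N=\dim_{K(z^p)}\Omega^{r+1}_{\mathcal L_{d-1}/K(z^p)}$ refers to, and it is where the positive-characteristic content of the theorem actually lives.
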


Our exposition was heavily influenced by the excellent monograph \cite{JVP3} of J. V. Pereira, where the Darboux-Jouanolou theory is appreciated on $\mathbb C^2$.

%Finally, we give an alternative definition of "closedness" for $r$-forms ("$p$-closedness", see Definition \ref{D:p-closed}), in such a way that it is possible to obtain the following algebraic general version of Poincaré's Lemma    bellow. In fact, we will see that this can be also easily deduced from the inverse of Cartier operator.

%\begin{THM}[Theorem \ref{T:Main3}]
%Let $K$ be a field of positive characteristic $p>0$ and $ R$ be $ K[z]$ or $ K(z)$. An $r$-form $\omega\in \Omega^r_{R/K}$ is exact if and only if $\omega$ is $p$-closed.
%\end{THM}

%In this paper, our exposition was heavily influenced by the excellent monograph \cite{JVP2}, where the Darboux-Jouanolou theory is appreciated on $\mathbb C^2$.

%%%%%%%%%%%%%%%%%%%%%%%%%%%%%%%%%%%%%%%%%%%%%%%%%%%%%%%%%%%%%%%%%%%%%%%%%%%%%%%%%%%%%%%%%%%%%%%%%%%%%%%%%%%%%%%%%%%%%%%%%%
\section{Differential Forms}
We remember first some definitions and results from reference \cite{Santos}, where the reader can obtain more information about vector fields and differential forms over arbitrary fields.

\begin{definition}
Let $K$ be a field of arbitrary characteristic $p>0$, where $p$ is a prime integer. The {\it ring of differential constants} is the sub-ring of $K[z]:=K[z_1,..., z_n]$ given by
$$
K[z^p]:=K[z_1^p,...,z_n^p]
$$
and the {\it field of differential constants} is the sub-field of $K(z):=K(z_1,..., z_n)$ given by
$$
K(z^p):=K(z_1^p,...,z_n^p)
$$%=\{f: f\in K(z), df=0 \}
We have natural inclusions $K[z]\subset K(z)$ and $K[z^p]\subset K(z^p)$. In fact, we have more: $K[z^p]$ is a  $K$-sub-module of $K[z]$ generated by $\{G^p: G\in K[z] \}$ and $K(z^p)$ is a $K$-vector subspace of $K(z)$ generated by $\{g^p: g\in K(z) \}$. The elements of $K(z^p)$ are called {\it $\partial$-constants}.% If $ K[z]=K[z_1,...,z_n]$, then $ K[z^p]:=K(z_1^p,...,z_n^p)=\{f: f\in K[z], df=0 \} $.

\end{definition}

The ring of polynomials $K[z]$ is a $K[z^p]$-algebra. If $n\geq 0$ then, for each $ j=1$,..., $n$, $ z_j^n= z_j^{p q_j+ i_j}=(z_j^{q_j})^pz_j^{i_j}$ with $ 0 \leq i_j < p$, hence $ K[z]$  is finitely generated as a $K[z^p]$-algebra. In the next proposition we will prove that $ K(z)$ is also  finitely generated as a vector space over $K(z^p)$.

Note that $K[z^p]-K^*$ is a $K$-sub-module of the module $K[z]$, and we have the following isomorphisms
$$
\mathcal M := K[z]\otimes_{K[z^p]} K[z^p]\cong \frac{K[z]}{(K[z^p]-K^*)}\otimes_K K[z^p] \cong \bigoplus_{0\leq i_1,...,i_n\leq p-1} K[z^p]\cdot z_1^{i_1}...z_n^{i_n}
$$
then $K[z]\simeq \mathcal M$. To make a $K(z^p)$-algebra from $K[z]$ we define
$$
\mathcal L := K[z]\otimes_{K[z^p]} K(z^p)\cong \frac{K[z]}{(K[z^p]-K^*)}\otimes_K K(z^p) \cong \bigoplus_{0\leq i_1,...,i_n\leq p-1} K(z^p)\cdot z_1^{i_1}...z_n^{i_n}
$$

%$$
%K(z^p)\cdot z^{0\leq i<s}
%$$

%In general, for an integer $s>0$ we can also define $K[z^s]$ as a $K$-algebra generated by $z_1^s$,..., $z_n^s$ and
%$$
%\mathcal M_s :=  \frac{K[z]}{(K[z^s]-K^*)}\otimes_K K(z^p) \cong \bigoplus_{0\leq i_1,...,i_n\leq s-1} K(z^p)\cdot z_1^{i_1}...z_n^{i_n}
%$$
%Then $\mathcal M_s$ is also a finite dimensional $K(z^p)$-vector space: $\dim_{K(z^p)}(\mathcal M_s)=s^n$. In particular, we have $\mathcal M_1=K(z^p)$ and $\mathcal M_p=\mathcal M$.

\begin{prop}[see also \cite{Brion}, 1.1.1 Lemma, p. 3]
Let $K$ be a field of characteristic $ p>0$. Then, as $K(z^p)$-vector spaces, $K(z)$ is isomorphic to $\mathcal L$, and hence $ K(z)$ is a finite dimensional vector space over $K(z^p)$, and $\dim_{K(z^p)}(K(z))=p^n$.
\end{prop}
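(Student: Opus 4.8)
The plan is to construct an explicit $K(z^p)$-linear map and prove it is an isomorphism, leaning on the direct-sum decomposition of $K[z]$ recorded just above the statement. First I would observe that the multiplication in $K(z)$ restricts to a $K[z^p]$-bilinear pairing $K[z] \times K(z^p) \to K(z)$, since $K[z^p] \subseteq K[z]$ and $K[z^p] \subseteq K(z^p) \subseteq K(z)$. By the universal property of the tensor product this induces a $K(z^p)$-linear map $\varphi : \mathcal L = K[z] \otimes_{K[z^p]} K(z^p) \to K(z)$ sending $z_1^{i_1}\cdots z_n^{i_n} \otimes 1$ to $z_1^{i_1}\cdots z_n^{i_n}$. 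Under the identification $\mathcal L \cong \bigoplus_{0\leq i_1,\dots,i_n\leq p-1} K(z^p)\, z_1^{i_1}\cdots z_n^{i_n}$ already established, proving the proposition reduces to showing that $\varphi$ is bijective; the dimension claim then follows at once, since $\dim_{K(z^p)} K(z) = \dim_{K(z^p)} \mathcal L = p^n$.

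For surjectivity I would identify the image $R = \varphi(\mathcal L)$ with the $K(z^p)$-subalgebra $K(z^p)[z_1,\dots,z_n]$ of $K(z)$. Because each generator satisfies $z_i^p \in K(z^p)$ by the very definition of the field of differential constants, every monomial can be reduced modulo $p$ in each exponent, so $R$ is spanned over $K(z^p)$ by the $p^n$ monomials $z_1^{i_1}\cdots z_n^{i_n}$ with $0 \leq i_j \leq p-1$; in particular $R$ is a finite-dimensional $K(z^p)$-algebra. Being a subring of the field $K(z)$, it is an integral domain, and a finite-dimensional algebra over a field which is a domain is itself a field. Hence $R$ is a field containing $K$ and all the $z_i$, so $K(z) \subseteq R$; as $R \subseteq K(z)$ trivially, we obtain $R = K(z)$ and $\varphi$ is onto.

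For injectivity I would exploit the decomposition $K[z] \cong \bigoplus_{\alpha} K[z^p]\, z^{\alpha}$ from the excerpt. Suppose $\sum_{\alpha} c_\alpha z^\alpha = 0$ in $K(z)$ with $c_\alpha \in K(z^p)$ and $0 \leq \alpha_j \leq p-1$. Clearing denominators by a common $b \in K[z^p]\setminus\{0\}$ turns this into a relation $\sum_{\alpha} a_\alpha z^\alpha = 0$ with $a_\alpha \in K[z^p]$, now holding in $K[z]$. The decomposition states precisely that these $z^\alpha$ form a $K[z^p]$-basis of $K[z]$, so every $a_\alpha$ vanishes, whence every $c_\alpha = 0$. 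Thus $\varphi$ is injective, which completes the isomorphism and the dimension count.

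The construction of $\varphi$ and the clearing of denominators are routine; the one place demanding genuine care is ensuring that no collapse occurs in the spanning set, i.e. that the $p^n$ reduced monomials are actually independent rather than merely spanning. I expect this linear independence to be the crux. The cleanest route is the one above, pushing the relation back into $K[z]$ and invoking the explicit $K[z^p]$-basis, rather than building a tower of simple purely inseparable extensions, where one would instead be forced to verify at each stage that $z_i \notin K(z_1,\dots,z_{i-1},z_i^p,\dots,z_n^p)$.
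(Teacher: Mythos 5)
Your proof is correct, but your surjectivity step takes a genuinely different route from the paper's. The paper handles spanning with a single explicit identity: for $f=\frac{P}{Q}\in K(z)$ one writes
\[
f=\frac{P}{Q}=\left(\frac{1}{Q}\right)^{p}\cdot Q^{p-1}P ,
\]
so that $f$ is exhibited at once as a $\partial$-constant, $(1/Q)^{p}\in K(z^p)$, multiplied by the polynomial $Q^{p-1}P$, which the decomposition of $K[z]$ over $K[z^p]$ then expands in the reduced monomials. You reach the same conclusion structurally: the image $R=K(z^p)[z_1,\dots,z_n]$ is a finite-dimensional $K(z^p)$-algebra, is a domain because it sits inside the field $K(z)$, hence is itself a field, hence contains $K(z)$. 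Both arguments are sound; the paper's trick is shorter and constructive (it produces the expansion of a given $f$ explicitly, in the spirit of the later $p$-degree computations), whereas yours leans on the standard ``finite-dimensional domain over a field is a field'' lemma and would transfer unchanged to other purely inseparable settings. On injectivity you are in fact more complete than the paper: its published proof only establishes the spanning property and leaves the $K(z^p)$-linear independence of the $p^n$ reduced monomials implicit, whereas you prove it by clearing denominators and invoking the $K[z^p]$-basis of $K[z]$ --- exactly the right way to close that gap, and, as you note, cleaner than building a tower of simple purely inseparable extensions.
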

\begin{proof}
A base for $K(z)$ over $K(z^p)$ is given by the monomials $z_1^{i_1}...z_n^{i_n}$, where $0\leq i_1,...,i_n\leq p-1$. In fact, if $f=\frac{P}{Q}\in K(z)$, we can write
$$
f=\frac{P}{Q}=\left(\frac{1}{Q}\right)^p\cdot Q^{p-1}P
$$
and the polynomial $Q^{p-1}P$ can be expressed as a $K(z^p)$-linear combination of the monomials $z_1^{i_1}...z_n^{i_n}$.

\end{proof}

%Here we will deal with $\Omega_{R/K}^r$, where $ R$ is the ring $K[z]$ or the field $K(z)$. The closedness or the exactness of an algebraic $ r$-form $ \omega$ is not changed by multiplication by a constant $\lambda \in K$ or a $\partial$-constant $\lambda \in K(z^p)$ and, by (iii) of the next proposition, one can always  suppose that $\omega$ is a polynomial $r$-form.

In what follows we consider multi-indexes $I=(i_1,...,i_n)$. If $J=(j_1,...,j_n)$ is another multi-index of {\it length} $n$, we write $J<I$ (resp. $J\leq I$) if $j_1<i_1$,..., $j_n<i_n$ (resp. $j_1\leq i_1$,..., $j_n\leq i_n$). There are two special multi-indexes: $\overrightarrow{0}=(0,...,0)$ and $\overrightarrow{p}=(p,...,p)$.

\begin{prop}\label{P:Base}
Let $ K$ be a field of characteristic $p>0$. Then
$$
K(z^p)=\{f\in K(z): df=0 \}
$$
and
$$
K[z^p]=\{F\in K[z]: dF=0 \}
$$
\end{prop}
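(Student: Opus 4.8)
The plan is to dispose of the two trivial inclusions first and then concentrate on the substantial reverse inclusions, treating the polynomial and rational statements in parallel. The inclusions $K[z^p]\subseteq\{F: dF=0\}$ and $K(z^p)\subseteq\{f: df=0\}$ are immediate: since $\partial(z_j^p)/\partial z_j = p\,z_j^{p-1}=0$ in characteristic $p$, the chain rule shows that every partial derivative of a polynomial (resp.\ rational function) in $z_1^p,\dots,z_n^p$ vanishes, and hence so does its differential, because $df=\sum_j (\partial f/\partial z_j)\,dz_j$ and the $dz_j$ are independent. The whole content of the proposition therefore lies in the converse implications.

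For the polynomial case I would argue directly. Writing $F=\sum_I a_I z^I$ with $z^I=z_1^{i_1}\cdots z_n^{i_n}$, the hypothesis $dF=0$ forces $\partial F/\partial z_j=\sum_I a_I\,i_j\,z^{I-e_j}=0$ for every $j$ (the terms with $i_j=0$ being absent), where $e_j$ denotes the $j$-th standard multi-index. Since distinct monomials are $K$-linearly independent, this gives $a_I\,i_j=0$ in $K$ for all $I$ and all $j$. Thus for each $I$ with $a_I\neq 0$ we must have $p\mid i_j$ for every $j$, so $z^I=(z_1^{i_1/p}\cdots z_n^{i_n/p})^p\in K[z^p]$, and therefore $F\in K[z^p]$.

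For the rational case the key tool is the basis exhibited in the preceding proposition: every $f\in K(z)$ has a unique expression $f=\sum_{\overrightarrow{0}\leq I\leq \overrightarrow{p}-\overrightarrow{1}} c_I\,z^I$ with coefficients $c_I\in K(z^p)$. Because each $c_I$ is a $\partial$-constant, differentiating termwise gives $\partial f/\partial z_k=\sum_I c_I\,i_k\,z^{I-e_k}$. For a fixed $k$, every monomial $z^{I-e_k}$ occurring here again has all exponents in the range $0,\dots,p-1$ (the $k$-th exponent drops to $i_k-1\leq p-2$, and the terms with $i_k=0$ vanish), so this is precisely the expansion of $\partial f/\partial z_k$ in the chosen basis. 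Setting it to zero and invoking $K(z^p)$-linear independence yields $c_I\,i_k=0$ for all $I$; since $i_k$ is invertible in $K$ whenever $1\leq i_k\leq p-1$, we conclude $c_I=0$ for every $I$ with $i_k\geq 1$. Letting $k$ range over $1,\dots,n$ kills every coefficient except $c_{\overrightarrow{0}}$, whence $f=c_{\overrightarrow{0}}\in K(z^p)$.

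The only delicate point is the bookkeeping in the last step—checking that the index shift $I\mapsto I-e_k$ keeps the monomials inside the basis, so that $K(z^p)$-linear independence may be applied termwise—but this is exactly what the range $0\leq i_k\leq p-1$ guarantees. I do not anticipate any genuine obstacle beyond this; the finite-basis structure supplied by the preceding proposition does the real work.
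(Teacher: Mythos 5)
Your proof is correct and follows essentially the same route as the paper's: for the rational case, both you and the paper expand $f$ in the monomial basis $\{z^I\}_{\overrightarrow{0}\leq I<\overrightarrow{p}}$ over $K(z^p)$ provided by the preceding proposition, differentiate termwise, and use $K(z^p)$-linear independence of the shifted monomials $z^{I-e_k}$ to conclude that every coefficient except $c_{\overrightarrow{0}}$ vanishes. The only divergence is the polynomial statement, which you prove directly by expanding over $K$ and invoking independence of distinct monomials, whereas the paper obtains it as an immediate particular case of the rational one; your version is self-contained and, if anything, states the needed conclusion (all $c_I=0$ for $I\neq\overrightarrow{0}$) more carefully than the paper's wording.
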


\begin{proof}
A base for $ K[z]$ over $K[z^p]$ is given by the monomials $z_1^{i_1}...z_n^{i_n}$, where $0\leq i_1,...,i_n\leq p-1$, and this is also a base  for $ K(z)$ over $K(z^p)$, by the above proposition.
It is clear that $K(z^p)\subset \{f\in K(z): df=0 \}$. Let $f\in K(z)$ be a rational function then, by the above proposition, 
$$
f(z)=\sum_{\overrightarrow{0}\leq I <\overrightarrow{p}} g_I(z^p)z^I
$$
Hence, writing $I=(I(1),...,I(n))$,
$$
\frac{\partial f}{\partial z_j}=\sum_{\begin{array}{c}
\overrightarrow{0} \leq I \leq \overrightarrow{p} \\
I(j)>0
\end{array}} g_I(z^p)I(j)z^Iz_j^{-1}
$$
and, fixing $j$, the monomials $z^Iz_j^{-1}$ are $K(z^p)$-linearly independent. Therefore, if $df=0$, then all $g_I$ with $\overrightarrow{0}< I$ must be $0$, so we must have $f=g_{\overrightarrow{0}}(z^p)\in K(z^p)$. In particular, $ K[z^p]=\{F\in K(z): dF=0 \}$. 

\end{proof}

%Furthermore, $K(z)$ is also a finitely generated $K(z^p)$-algebra, whose generators are $z_1$,..., $z_n$ and
%$$
%K(z)\simeq _{K(z^p)} K(z^p)[z_1,...,z_n]/I_p
%$$
%where, writing $\bar{z}_i^p$ for the generators of $K(z^p)$, $I_p\subset K(z^p)[z_1,...,z_n]$ is the ideal generated by the elements $z_i^p-\bar{z}_i^p$, $i=1$,..., $n$.

Define the $K[z^p]$-sub-module of $K[z^p]$
$$
\mathcal M_d:= \bigoplus_{\begin{array}{c}
0 \leq i_1,...,i_n \leq p-1 \\
0\leq i_1+...+i_n \leq d
\end{array}} K[z^p]\cdot z_1^{i_1}...z_n^{i_n}
$$
and the $K(z^p)$-vector subspace of $K(z)$
$$
\mathcal L_d:= \bigoplus_{\begin{array}{c}
0 \leq i_1,...,i_n \leq p-1 \\
0\leq i_1+...+i_n \leq d
\end{array}} K(z^p)\cdot z_1^{i_1}...z_n^{i_n}
$$
where the number $d$ is called the {\it $p$-degree} of the module (or vector space) if $d\leq n(p-1)$.

We can write a decomposition by $p$-degree% (as $K(z^p)$-modules):
$$
K[z^p]=\mathcal M_0 \subset \mathcal M_1\subset ... \subset \mathcal M_{n(p-1)}=K[z]
$$
and
$$
K(z^p)=\mathcal L_0 \subset \mathcal L_1\subset ... \subset \mathcal L_{n(p-1)}=K(z)
$$
%and
%$$
%K(z)=\mathcal M=\bigcup_{d=0}^{n(p-1)} \mathcal M_d
%$$

Let $A$ be a ring and let $B$ be an $A$-algebra. We denote by $\Omega^1_{B/A}$ the {\it module of relative differential $1$-forms of $B$ over $A$}, that is, if $E$ is the free $A$-module generated by $ db $ with $b\in B$ and $ F$ is the submodule generated by $da$ with $ a \in A$, then $\Omega^1_{B/A}:= E/F$ with an $A$-derivation $d: B \rightarrow \Omega^1_{B/A}$ (see \cite{Liu}, page 210 ). By using the exterior product, we obtain $\Omega^r_{B/A}:=\Omega^1_{B/A}\wedge...\wedge \Omega^1_{B/A}$ ($r$ times). If $M\subset B$ is an $A$-sub-module of $B$, then we define
$$
\Omega^1_{M/A}:=d(M)\subset \Omega^1_{B/A}
$$

We can easily deduce the isomorphisms  $\Omega^1_{K[z]/K}\simeq \Omega^1_{K[z]/K[z^p]}$ as $K[z^p]$-modules and $\Omega^1_{K(z)/K}\simeq \Omega^1_{K(z)/K(z^p)}$ as $K(z^p)$-vector spaces.

\begin{definition}\label{D:NK}
%We define the exterior algebras  $\Omega^r_{K[z]/K}:=\Omega^1_{K[z]/K}\wedge...\wedge \Omega^1_{K[z]/K}$ and $\Omega^r_{K(z)/K}:=\Omega^1_{K(z)/K}\wedge...\wedge \Omega^1_{K(z)/K}$ ($r$ times) and 
If $ d \geq 0$ is an integer, then we denote by $\Omega^r_d(K^n)\subset \Omega^r_{K[z]/K}$ the $K$-vector subspace of polynomial differential $r$-forms of degree $\leq d$. Then
$$
\Omega_{\mathcal M_d/K[z^p]}^r:=\Omega^1_{\mathcal M_d/K[z^p]}\wedge
...\wedge\Omega^1_{\mathcal M_d/K[z^p]}\simeq \Omega_0^r(K^n)\otimes_K \mathcal M_d= \Omega_0^r(K^n)\otimes_K \mathcal M_{s(d)}
$$
and
$$
\Omega_{\mathcal L_d/K(z^p)}^r:=\Omega_{\mathcal L_d/K(z^p)}^1\wedge
...\wedge \Omega_{\mathcal L_d/K(z^p)}^1\simeq \Omega_0^r(K^n)\otimes_K \mathcal L_d= \Omega_0^r(K^n)\otimes_K \mathcal L_{s(d)}
$$
where $s(d)=\min\{d,n(p-1)\}$, and we define
$$
N_K(n, d, r)=N(p,n,d,r):=\dim_{K(z^p)}(\Omega_{\mathcal L_d/K(z^p)}^r)  \leq \min\{\binom{n}{r}\cdot p^n, \dim_K(\Omega_d^r(K^n))\}
$$

\end{definition}

These two modules of $r$-forms have natural decomposition by $p$-degree:
$$
\Omega^r_{\mathcal M_0/K[z^p]}\subset \Omega^r_{\mathcal M_1/K[z^p]}\subset ... \subset \Omega^r_{\mathcal M_{n(p-1)}/K[z^p]}=\Omega^r_{K[z]/K}
$$
and
$$
\Omega^r_{\mathcal L_0/K(z^p)}\subset \Omega^r_{\mathcal L_1/K(z^p)}\subset ... \subset \Omega^r_{\mathcal L_{n(p-1)}/K(z^p)}=\Omega^r_{K(z)/K}
$$

%For $d\geq n(p-1)$ we obtain $\widehat{\Omega}_{d}^r(K^n)=\widehat{\Omega}_{n(p-1)}^r(K^n)$. %there is a natural number $e$ such that $0\leq e\leq n(p-1)$ and if  $\widehat{\Omega}_{d'}^r(K^n)=\widehat{\Omega}_d^r(K^n)$ then  $  e\leq d'$. This number $e$ is, by definition, the {\it degree} of  $\widehat{\Omega}_d^r(K^n)=\widehat{\Omega}_e^r(K^n)$.

For every $d\geq 0$ we have a natural inclusion ${\Omega}_d^r(K^n)\rightarrow\Omega^r_{\mathcal L_{s(d)}/K(z^p)}$.%, and we can write the commutative diagram bellow.

%$$\xymatrix{
%0\ar@{->}[r]& K\ar@{->}[r]^-{d}\ar@{=}[d] &K[z]\ar@{->}[r]^-{d}\ar@{->}[d]&\Omega^1_d(K^n)\ar@{->}[r]^-{d}\ar@{->}[d] &...\ar@{->}[r]^-{d} &\Omega^r_0(K^n) \ar@{->}[r]\ar@{->}[d]& 0\\
%0\ar@{->}[r]&K\ar@{->}[r]^-{d} &K(z)\ar@{->}[r]^-{d}&\Omega^r_{\mathcal L_d/K(z^p)}\ar@{->}[r]^-{d}& ...\ar@{->}[r]^-{d}&\Omega^r_{\mathcal L_0/K(z^p)} \ar@{->}[r]& 0
%}$$

%For every $r>0$  there exists an inclusion homomorphism of $K$-vector spaces (in fact also of $K[z^p]$-modules)
%$$
%\Omega^r_{K[z^p]/K}\simeq\Omega^r_{\mathcal M_{n(p-1)}/K[z^p]}\rightarrow \Omega^r_{K(z^p)/K}\simeq\Omega^r_{\mathcal L_{n(p-1)}/K(z^p)}
%$$
%compatible with the $d$ operator and then there exists a natural  inclusion homomorphism between the {\it anti-commutative graded algebras}
%$$
%\Omega^*_{K[z^p]/K}=\bigoplus_{r=0}^{\infty} \Omega^r_{K[z^p]/K}\rightarrow \Omega^*_{K(z^p)/K}=\bigoplus_{r=0}^{\infty} \Omega^r_{K(z^p)/K}
%$$
%%which preserves also the grading by degree.

%%**************************************************************************************************************
\section{Logarithmic 1-forms and Residues}

\begin{definition}
Let $F_1$,..., $F_m$ be a collection of irreducible polynomials and $\lambda_1$,..., $\lambda_m\in K(z^p)$ be $\partial$-constants. The linear combination
$$
\eta=\lambda_1\frac{dF_1}{F_1}+...+\lambda_m\frac{dF_m}{F_m}
$$
will be called a {\it $\partial$-logarithmic} $1$-form (or simply a {\it logarithmic} $1$-form). Note that every $\partial$-logarithmic $1$-form $\eta$ is {\it closed}, that is, $d\eta=0$.
\end{definition}

Remember also the {\it Hilbert's Nullstellensatz}:

\begin{thm}[\cite{Lang}, Theorem 1.5, p. 380]
Let $I$ be an ideal of $K[z]$ and let $V(I)=\{c\in (\overline{K})^n: f(c)=0, \forall f\in I\}$ be the algebraic variety associated to $I$, where $\overline{K}$ is the algebraic closure of $K$. Let $P$ be a polynomial in $K[z]$ such that $P(c)=0$ for every zero $(c)=(c_1,...,c_n)\in V(I)$. Then there is an integer $m>0$ such that $P^m\in I$.
\end{thm}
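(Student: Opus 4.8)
The plan is to deduce this strong form of the Nullstellensatz from its weak form by means of the Rabinowitsch trick. First I would reduce to a finitely generated situation: since $K[z]=K[z_1,\dots,z_n]$ is Noetherian by the Hilbert basis theorem, we may write $I=(f_1,\dots,f_s)$. The heart of the matter is the \emph{weak Nullstellensatz}, namely that a proper ideal of $\overline{K}[z]$ has nonempty zero locus in $\overline{K}^n$. I would establish this through \emph{Zariski's lemma}: every field that is finitely generated as an algebra over a field $k$ is a finite algebraic extension of $k$ (provable via Noether normalization, or by induction using the Artin--Tate argument). Granting Zariski's lemma, given a proper ideal $\mathfrak a\subsetneq \overline{K}[z]$, choose a maximal ideal $\mathfrak m\supseteq \mathfrak a$; then $L:=\overline{K}[z]/\mathfrak m$ is a field finitely generated as a $\overline{K}$-algebra, hence algebraic over $\overline{K}$, hence $L=\overline{K}$ because $\overline{K}$ is algebraically closed. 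The images $c_i$ of $z_i$ in $L=\overline{K}$ assemble into a point $c=(c_1,\dots,c_n)$ at which every generator vanishes, so $V(\mathfrak a)\neq\emptyset$.

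With the weak form in hand, I would run the Rabinowitsch trick. Introduce one auxiliary variable $t$ and form the ideal
$$
J=(f_1,\dots,f_s,\,1-tP)\subseteq K[z_1,\dots,z_n,t].
$$
Any common zero $(c,\tau)\in\overline{K}^{\,n+1}$ of $J$ would satisfy $f_i(c)=0$ for all $i$, hence $c\in V(I)$, hence $P(c)=0$ by hypothesis; but then the last generator gives $1-\tau P(c)=1\neq 0$, a contradiction. Thus $V(J)=\emptyset$, so $J$ is not proper by the weak Nullstellensatz, i.e. $1\in J$. Writing
$$
1=\sum_{i=1}^s g_i(z,t)\,f_i(z)+h(z,t)\bigl(1-tP(z)\bigr)
$$
with $g_i,h\in K[z,t]$ and then substituting $t=1/P$ (equivalently, mapping into the localization $K[z][1/P]$) kills the last term, leaving $1=\sum_i g_i(z,1/P)\,f_i$ in $K[z][1/P]$. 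Clearing denominators by multiplying through by $P^m$, where $m$ bounds the $t$-degrees of the $g_i$, yields $P^m=\sum_i \tilde g_i f_i$ with $\tilde g_i\in K[z]$, so $P^m\in I$ as required.

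The step I expect to be the genuine obstacle is the weak Nullstellensatz, or more precisely Zariski's lemma underlying it; everything after that---the Rabinowitsch substitution and the clearing of denominators---is formal manipulation. One subtlety worth flagging is that the hypothesis and conclusion are naturally phrased over $\overline{K}$ (zeros are taken in $\overline{K}^n$) while $P$ and $I$ live over $K$; this causes no difficulty, since the $f_i$ and $P$ already have coefficients in $K$, so the identity $1=\sum g_i f_i+h(1-tP)$ and the resulting $P^m=\sum\tilde g_i f_i$ can be arranged with coefficients in $K[z,t]$ and $K[z]$ respectively. Finally, the argument is insensitive to the characteristic, so the ambient positive-characteristic setting of the paper requires no modification.
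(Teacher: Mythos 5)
The paper does not prove this statement at all: it is quoted as background, with a citation to Lang's \emph{Algebra} (Theorem 1.5, p.~380), and is then used as a black box (to justify the equivalence of the two formulations of invariance, and in Jouanolou's lemma). So there is no in-paper proof to compare against; the relevant comparison is with the cited source, and your argument is essentially the standard textbook proof found there: reduce to the weak Nullstellensatz (via Zariski's lemma) and apply the Rabinowitsch trick. Your proof is correct in outline and in substance.

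One step deserves more justification than you give it. The weak Nullstellensatz, as you prove it, applies to ideals of $\overline{K}[z,t]$, whereas $J$ lives in $K[z,t]$; what $V(J)=\emptyset$ actually yields is $1\in J\cdot\overline{K}[z,t]$, so the B\'ezout identity, and hence the relation $P^m=\sum_i \tilde g_i f_i$ obtained after substituting $t=1/P$ and clearing denominators, initially has coefficients in $\overline{K}[z,t]$ and $\overline{K}[z]$ respectively. Your remark that the coefficients ``can be arranged'' in $K$ because the $f_i$ and $P$ already lie in $K[z]$ is true but not automatic: you need a descent argument. Two standard ways to finish: (a) note that $I\cdot\overline{K}[z]\cap K[z]=I$, since $\overline{K}[z]$ is faithfully flat over $K[z]$ (concretely, choose a $K$-basis of $\overline{K}$ containing $1$ and project onto the $K[z]$-summand); or (b) observe that $P^m=\sum_i \tilde g_i f_i$, with the coefficients of the $\tilde g_i$ treated as unknowns, is a system of $K$-linear equations, and a linear system with coefficients in $K$ that is solvable over the extension field $\overline{K}$ is already solvable over $K$. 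With that one sentence added the proof is complete, and, as you say, it is insensitive to the characteristic, so it applies verbatim in the paper's positive-characteristic setting.
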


Using the above theorem, Jouanolou proved the following important lemma (\cite{Jouanolou}, Lemme 3.3.2, p. 102) in the case that the characteristic of the field $ K$ is  0.

\begin{lemma}\label{L:Jouanolou}
If $\mathcal S$ is a finite representative system of primes in $K[z]=K[z_1,...,z_n]$ (that is, $\mathcal S$ is a finite collection of distinct irreducible polynomials), then the $K$-linear map
$$
K^{(\mathcal S)}\longrightarrow \Omega^1_{K(z)/K}
$$
$$
(\lambda_j)_{F_j\in \mathcal S}\longmapsto \sum_{F_j\in \mathcal S} \lambda_j \frac{dF_j}{F_j}
$$
is injective.
\end{lemma}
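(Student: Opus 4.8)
\emph{The plan.} Since the coefficients $\lambda_j$ live in $K$ while the target is built from the universal derivation $d$, the natural strategy is to clear denominators, pass to the free $K[z]$-module $\Omega^1_{K[z]/K}=\bigoplus_i K[z]\,dz_i$, and then isolate one coefficient at a time by reducing modulo a prime. Write $\mathcal S=\{F_1,\dots,F_m\}$ and suppose $\sum_{j=1}^m\lambda_j\frac{dF_j}{F_j}=0$. Put $F=\prod_{j}F_j$ and $G_j=F/F_j=\prod_{i\neq j}F_i\in K[z]$. Multiplying the relation by $F$ gives the polynomial identity
$$
\sum_{j=1}^m \lambda_j\,G_j\,dF_j=0 \qquad\text{in }\Omega^1_{K[z]/K}.
$$
Expanding $dF_j=\sum_{i=1}^n \frac{\partial F_j}{\partial z_i}\,dz_i$ and using freeness on the basis $dz_1,\dots,dz_n$, this is equivalent to the $n$ scalar identities $\sum_{j}\lambda_j\,G_j\,\frac{\partial F_j}{\partial z_i}=0$ in $K[z]$, one for each $i$.

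\emph{Isolating $\lambda_k$.} Fix $k$ and reduce each identity modulo $F_k$. For $j\neq k$ the product $G_j$ contains the factor $F_k$, so all those terms vanish, leaving
$$
\lambda_k\,G_k\,\frac{\partial F_k}{\partial z_i}\equiv 0 \pmod{F_k}\qquad (1\le i\le n).
$$
Because the $F_j$ are distinct irreducibles, $F_k\nmid G_k$; equivalently, since $K[z]$ is a UFD and $(F_k)$ is prime, $G_k$ is a nonzero element of the integral domain $K[z]/(F_k)$. (This is one place where Hilbert's Nullstellensatz can be invoked, producing a point of $V(F_k)$ lying off the remaining hypersurfaces.) Cancelling $G_k$ yields $\lambda_k\,\frac{\partial F_k}{\partial z_i}\equiv 0\pmod{F_k}$ for every $i$.

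\emph{Degree count and the main obstacle.} Assume $\lambda_k\neq 0$; then $F_k\mid \frac{\partial F_k}{\partial z_i}$ for all $i$. Differentiation strictly lowers total degree, so $\deg\frac{\partial F_k}{\partial z_i}\le \deg F_k-1$ whenever $\frac{\partial F_k}{\partial z_i}\neq 0$, while in the domain $K[z]$ any nonzero multiple of $F_k$ has total degree at least $\deg F_k$; hence $\frac{\partial F_k}{\partial z_i}=0$ for all $i$, that is $dF_k=0$. By Proposition \ref{P:Base} this forces $F_k\in K[z^p]$. Here lies the crux of the positive-characteristic setting, and the step I expect to be the real obstacle: over a \emph{non-perfect} field an irreducible polynomial may genuinely lie in $K[z^p]$ (for instance $z_1^p-t$ over $\mathbb{F}_p(t)$), and then $\frac{dF_k}{F_k}=0$, so injectivity cannot hold for such an $F_k$. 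The substantive content of the lemma is therefore the implication $F_k\notin K[z^p]\Rightarrow\lambda_k=0$, which is what the argument above establishes; it is automatic when $K$ is perfect, where an irreducible polynomial always satisfies $dF_k\neq0$. Granting that the primes of $\mathcal S$ lie outside $K[z^p]$ (for example after base change to $\overline K$), we obtain $\lambda_k=0$ for every $k$, which is the asserted injectivity.
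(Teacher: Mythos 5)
The paper does not actually prove this lemma --- it is quoted from Jouanolou (\cite{Jouanolou}, Lemme 3.3.2) with the explicit caveat that it was proved there for $\mathrm{char}(K)=0$, and the positive-characteristic substitute the paper really uses is Lemma \ref{L:Logaritmicas} (quoted from \cite{Santos}), whose proof goes through a residue formula (Lemma \ref{L:Residuo}) and which must both restrict to non-$\partial$-constant primes and enlarge the scalars to $K(z^p)$. So your argument is necessarily a different route, and it is a correct and pleasantly elementary one: clearing denominators into the free module $\Omega^1_{K[z]/K}$, reducing the coordinate identities modulo $F_k$ (the UFD argument suffices here; your Nullstellensatz aside is dispensable), and using the degree drop of partial derivatives to force $dF_k=0$, which in characteristic $0$ contradicts irreducibility of $F_k$. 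This proves the statement Jouanolou intended without invoking the Nullstellensatz or residues, and your argument in fact yields the sharper general statement: $\lambda_k=0$ for every $F_k$ with $dF_k\neq 0$. Your diagnosis of the obstacle in characteristic $p$ is also exactly the paper's own: the Remark following Lemma \ref{L:Logaritmicas} records the same counterexample ($P=x^p-\lambda$ irreducible over a non-perfect field with $dP=0$), showing the statement is literally false over arbitrary fields, which is why the paper's char-$p$ version hypothesizes non-$\partial$-constant primes. One quibble: your closing parenthetical ``for example after base change to $\overline K$'' does not repair anything --- over $\overline K$ the polynomial $x^p-\lambda$ becomes a $p$-th power whose logarithmic differential is still zero, and base change can destroy irreducibility --- so the needed hypothesis should simply be stated as $dF_k\neq 0$ for every $F_k\in\mathcal S$, exactly as in Lemma \ref{L:Logaritmicas}.
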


\begin{remark}
Two irreducible polynomials $P$, $Q\in K[z]$ are {\it distinct} if there is no constant $\lambda \in K^*$ such that $P=\lambda Q$, that is, the rational function $f=\frac{P}{Q}\in K(z)$ is not a constant ($f\notin K$).
\end{remark}

The following lemma is a residue formula over positive characteristic $p>0$.% can be find in \cite{Santos}.

\begin{lemma}[\cite{Santos}, Lemma 5.2]\label{L:Residuo}
Let $G\in K[x]$, $G(0)\neq 0$, be a polynomial function in one variable, where $K$ is a field of positive characteristic $p>0$. If $\alpha \in K(x^p)$, then%and $\alpha(0)=0$ or $\alpha(0)=\infty$, then
$$
Res(\alpha\cdot \frac{dG}{G},0)=0
$$
\end{lemma}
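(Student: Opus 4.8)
The plan is to localize the whole computation at the origin and to play the regularity of $dG/G$ against the fact that $\alpha$ is a $\partial$-constant. Since $G(0)\neq 0$, $G$ is a unit in the local ring $K[x]_{(x)}$, so $\ord_0(G)=0$ and the logarithmic form $dG/G=(G'/G)\,dx$ is regular at $0$; passing to the completion I may write $G'/G=\sum_{j\ge 0}b_j x^j\in K[[x]]$. The behaviour of $\alpha$ I would record first: by Proposition \ref{P:Base} one has $d\alpha=0$, and the one–variable instance of the basis $\{x^i:0\le i\le p-1\}$ of $K(x)$ over $K(x^p)$ forces the Laurent expansion of $\alpha$ at $0$ to involve only exponents divisible by $p$. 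Over $\overline K$, which is perfect so that Frobenius is surjective, this is exactly the statement that $\alpha=h^p$ for some $h\in\overline K(x)$, whence $\alpha=\sum_k a_k x^{pk}$ with all exponents in $p\mathbb Z$.

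Next I would reduce to the genuinely singular part of $\alpha$. Decomposing $\alpha$ by partial fractions in the variable $x^p$, the summands with no pole at $0$ multiply the regular form $dG/G$ to give a regular form, whose residue is $0$; so only the principal part $\sum_{j\ge 1}c_j x^{-pj}$ of $\alpha$ at $0$ can contribute, and by linearity it suffices to treat the single term $\alpha=x^{-pj}=(x^{-j})^p$. The clean engine here is the characteristic-$p$ identity $\Res(\phi^p\,d\psi,0)=0$, valid for all $\phi,\psi\in K((x))$: the coefficient of $x^{-1}$ in $\phi^p\psi'$ must pair an exponent $\equiv 0 \pmod p$ coming from $\phi^p$ with an exponent $\equiv -1\pmod p$ coming from $\psi'$, and every such coefficient of $\psi'=\sum_\ell \ell f_\ell x^{\ell-1}$ carries the factor $\ell\equiv 0\pmod p$ and therefore vanishes. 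This disposes at once of the $d$-exact part of $dG/G$.

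The hard part is that $dG/G$, although regular, need not be exact in $K[[x]]$: its obstruction to possessing a primitive is precisely the Cartier class, carried by the coefficients $b_{pk-1}$ of the $x^{pk-1}$, which are exactly the terms that survive the pairing above. The crux of the argument is thus to show that after multiplication by the $p$-th power $\alpha$ these surviving terms contribute nothing beyond a factor of $\ord_0(G)$ — concretely, to write $dG/G=\ord_0(G)\tfrac{dx}{x}+d\eta$ in the local field, so that $\Res(\alpha\,dG/G,0)$ collapses to $\ord_0(G)$ times the residue of a $\partial$-constant against $\tfrac{dx}{x}$, which is $0$ because $\ord_0(G)=0$. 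Isolating that $\ord_0(G)\tfrac{dx}{x}$ term is exactly where the hypothesis $G(0)\neq 0$ is used decisively, and it is where the real work lies; I note that if one instead adopts the convention in which $\Res$ is the residue of the $\partial$-logarithmic form with $\alpha$ read as a formal $\partial$-constant coefficient, the statement reduces immediately to the observation $\Res(\alpha\,dG/G,0)=\alpha\cdot\ord_0(G)=0$.
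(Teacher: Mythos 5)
Most of your argument is correct: since $G(0)\neq 0$ the form $dG/G$ is regular at $0$; the Laurent expansion of $\alpha\in K(x^p)$ at $0$ involves only exponents divisible by $p$; the part of $\alpha$ regular at $0$ contributes nothing; and your identity $\Res(\phi^p\,d\psi,0)=0$ is valid and kills any exact part of $dG/G$. Writing $G'/G=\sum_{j\ge 0}b_jx^j$ and the principal part of $\alpha$ as $\sum_{m\ge 1}a_{-m}x^{-pm}$, what survives after these reductions is exactly
$$
\Res\Bigl(\alpha\,\frac{dG}{G},0\Bigr)=\sum_{m\ge 1}a_{-m}\,b_{pm-1},
$$
as your Cartier-class remark correctly identifies. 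But the step you call the crux, and which you concede is ``where the real work lies,'' is not merely missing --- it is false. The decomposition $dG/G=\ord_0(G)\,\tfrac{dx}{x}+d\eta$ would force $b_{pm-1}=0$ for all $m\ge 1$, i.e.\ that $dG/G$ is exact in $K[[x]]$ whenever $G(0)\neq 0$. In characteristic $p$ logarithmic forms of units are Cartier-fixed, not exact; this is precisely the obstruction your own preceding paragraph describes. Concretely, for $G=1+x$ one has $G'/G=\sum_{j\ge 0}(-1)^jx^j$, so $b_{p-1}=(-1)^{p-1}=1\neq 0$, and no such $\eta$ exists.

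Moreover this gap cannot be repaired, because with the residue understood in the standard way (the coefficient of $x^{-1}$ in the Laurent expansion at $0$ --- the meaning your entire computation uses), the statement itself is false. Take $G=1+x$, so $G(0)=1\neq 0$, and $\alpha=x^{-p}\in K(x^p)$. Then
$$
\alpha\,\frac{dG}{G}=\frac{dx}{x^p(1+x)}=\Bigl(\sum_{j\ge 0}(-1)^jx^{j-p}\Bigr)dx,
$$
whose coefficient of $x^{-1}$ is $(-1)^{p-1}=1\neq 0$; for $p=2$ this is the partial-fraction identity $\frac{1}{x^2(1+x)}=\frac{1}{x}+\frac{1}{x^2}+\frac{1}{1+x}$ over $\mathbb{F}_2$. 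Note also that there is no proof in the paper to compare yours against: the lemma is imported by citation from \cite{Santos}, and $\Res$ is never defined in this paper. The only reading under which the asserted identity holds is the one in your closing sentence, where the ``residue'' of the $\partial$-logarithmic form $\alpha\,dG/G$ at $0$ is declared to be $\alpha\cdot\ord_0(G)$; under that convention the lemma is true by fiat and all the machinery of your first paragraphs is irrelevant. So either the hypothesis must be strengthened (e.g.\ to $\alpha$ regular at $0$, in which case $\alpha\,dG/G$ is regular at $0$ and the claim is trivial), or the residue must be taken in that logarithmic sense; as a proof of the statement as written, your attempt fails at the step you flagged, and necessarily so.
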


The following lemma is the {\it Positive Characteristic Jouanolou's Lemma}.% for arbitrary fields.

\begin{lemma}[\cite{Santos}, Lemma 5.3]\label{L:Logaritmicas}
If $\mathcal S$ is a finite representative system of non-$\partial$-constant primes in $K[z]-K[z^p]$ (that is, $\mathcal S$ is a finite collection of distinct non-$\partial$-constant irreducible polynomials), then the $K(z^p)$-linear map
$$
K(z^p)^{(\mathcal S)}\longrightarrow \Omega^1_{K(z)/K}
$$
$$
(\lambda_j)_{F_j\in \mathcal S}\longmapsto \sum_{F_j\in \mathcal S} \lambda_j \frac{dF_j}{F_j}
$$
is injective.
\end{lemma}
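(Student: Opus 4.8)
The plan is to deduce injectivity from the contrapositive statement about logarithmic forms: a vanishing combination $\eta=\sum_{j}\lambda_j\frac{dF_j}{F_j}=0$, with $\lambda_j\in K(z^p)$ and the $F_j$ distinct non-$\partial$-constant irreducibles, must have every $\lambda_j=0$. The guiding idea is exactly the one behind Jouanolou's characteristic-$0$ argument: localize along each prime divisor $\{F_k=0\}$ and read off $\lambda_k$ from a residue. What changes in characteristic $p$ is the residue bookkeeping, and this is where Lemma \ref{L:Residuo} does the essential work.

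The first step is to turn the extraction into a one-variable residue. Fixing $k$, I pass to the completed local ring of $K[z]$ at the generic point of the irreducible hypersurface $\{F_k=0\}$; since $F_k$ is irreducible this is a discrete valuation ring with uniformizer $\pi=F_k$, and its completion is $\kappa_k[[\pi]]$ for the residue field $\kappa_k$. The decisive observation is that every $\partial$-constant, expanded in the parameter $\pi$, is a Laurent series whose exponents are all divisible by $p$: the Frobenius identity $(\sum a_i\pi^i)^p=\sum a_i^p\pi^{ip}$ shows that each $z_i^p$ lies in $\kappa_k((\pi^p))$, and hence so does every element of $K(z^p)$. Thus, after localizing, each $\lambda_j$ becomes a genuine one-variable $\partial$-constant in $\pi$, and the identity $\eta=0$ becomes a one-variable logarithmic identity of exactly the type governed by Lemma \ref{L:Residuo}.

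Now I extract the coefficients. For $j\neq k$ the polynomial $F_j$ is a unit in the local ring (distinct irreducibles have distinct zero loci, so $F_j$ does not vanish on $\{F_k=0\}$), which is precisely the hypothesis $G(0)\neq 0$ of Lemma \ref{L:Residuo}; consequently $\Res_\pi\!\big(\lambda_j\tfrac{dF_j}{F_j}\big)=0$, and it is important that this holds even when the $\partial$-constant $\lambda_j$ has a pole along $\{F_k=0\}$. For $j=k$ one has $\frac{dF_k}{F_k}=\frac{d\pi}{\pi}$, so the residue of the $k$-th term is the $\pi^0$-coefficient of $\lambda_k$, i.e.\ its value along $\{F_k=0\}$. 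Summing residues in $\eta=0$ therefore forces this value to vanish, which says that $F_k$ divides the numerator of $\lambda_k$. Using that a $\partial$-constant divisible by $F_k$ is in fact divisible by $F_k^{p}$ — writing $P=F_k^{e}S$ with $F_k\nmid S$, the identity $dP=0$ gives $e\,S\,dF_k=-F_k\,dS$, and since $F_k$ is irreducible with $dF_k\neq 0$ it cannot divide $S\,\partial_i F_k$ for a suitable $i$, forcing $p\mid e$ — I may write $\lambda_k=F_k^{p}\mu_k$ with $\mu_k\in K(z^p)$, so the relation becomes $\sum_k\mu_k F_k^{p-1}\,dF_k=0$.

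The main obstacle is the final descent from ``$F_k$ divides $\lambda_k$'' to ``$\lambda_k=0$''. A single residue sees only the divisor $\{F_k=0\}$ and detects divisibility by $F_k$ (hence by $F_k^{p}$) but not vanishing; moreover the $\partial$-constant coefficients may carry poles of order a multiple of $p$ along the $\{F_k=0\}$, which couples the higher polar coefficients of the different terms, so one pass merely reproduces a relation of the same shape. To close the argument I would compare all Laurent coefficients at $\{F_k=0\}$, not just the residue, and decompose the identity along the monomial basis $\{z^{I}:\overrightarrow{0}\leq I<\overrightarrow{p}\}$ of $K(z)$ over $K(z^p)$; the $p$ graded components then feed a Vandermonde-type elimination over the distinct local data attached to the (several) branches of $\{F_k=0\}$, crucially exploiting that the single coefficient $\lambda_k$ is shared by all of them. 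This forces $\mu_k=0$, hence $\lambda_k=0$. Finally, to keep the localization and the elimination valid over an arbitrary field $K$ — in particular when $K$ is finite or imperfect — I would work throughout at the generic point of each divisor and argue directly with the $\partial$-constant coefficients, rather than specializing to generic $K$-rational intersection points.
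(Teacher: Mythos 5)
A preliminary remark: this paper contains no proof of the lemma at all --- it is quoted verbatim from [\cite{Santos}, Lemma 5.3] --- so there is no in‑paper argument to compare your route against; your proposal has to be judged on its own merits.

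Your own diagnosis of your proof is accurate: the part up to ``$F_k^{p}$ divides $\lambda_k$'' can be made rigorous (after one repair, noted below), and the one remaining step is the descent from divisibility to vanishing. But that gap cannot be filled, because the statement as reproduced here is false. Take any field $K$ of characteristic $p$ with at least $p+1$ elements, distinct $a_1,\dots,a_{p+1}\in K$, and $F_i=z_1-a_i$. The system $\sum_i c_i a_i^{s}=0$, $s=0,\dots,p-1$, has $p$ equations and $p+1$ unknowns, hence a nontrivial solution $(c_i)$ in $K$. Put $\lambda_i:=c_i(z_1-a_i)^{p}=c_i(z_1^{p}-a_i^{p})\in K[z^p]$. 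Using the characteristic-$p$ identity $(z_1-a_i)^{p-1}=\sum_{s=0}^{p-1}a_i^{\,p-1-s}z_1^{s}$,
$$
\sum_{i=1}^{p+1}\lambda_i\,\frac{dF_i}{F_i}
=\Bigl(\sum_{i}c_i(z_1-a_i)^{p-1}\Bigr)dz_1
=\sum_{s=0}^{p-1}\Bigl(\sum_i c_i a_i^{\,p-1-s}\Bigr)z_1^{s}\,dz_1=0 ,
$$
although not all $\lambda_i$ vanish. (Concretely, for $p=2$, $K=\mathbb F_4$ with generator $u$, $a=(0,1,u)$: $\lambda_1=(1+u)z_1^2$, $\lambda_2=u(z_1+1)^2$, $\lambda_3=(z_1+u)^2$.) Note that this kernel element satisfies exactly the constraint you derived --- each $\lambda_i$ is divisible by $F_i^{p}$ --- and it is precisely your projected ``Vandermonde-type elimination'' that degenerates: the Vandermonde system acquires a nontrivial kernel as soon as $|\mathcal S|>p$. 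A cruder count says the same thing: $\Omega^1_{K(z)/K}$ has $K(z^p)$-dimension $np^{n}$, so the map cannot be injective once $|\mathcal S|>np^{n}$. Injectivity does hold when the coefficients are constants in $K$ (Lemma \ref{L:Jouanolou} survives in characteristic $p$, essentially by your first step), but not with $K(z^p)$-coefficients; so the obstruction you ran into lies in the statement itself, not in your execution, and it equally affects the way Lemma \ref{L:Logaritmicas} is invoked in the proof of Theorem \ref{T:Main}.

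Separately, one intermediate claim of yours is wrong even in situations where the conclusion is true: the residue of $\lambda_j\,dF_j/F_j$ along $\{F_k=0\}$ does \emph{not} vanish when $\lambda_j$ has a pole there, and Lemma \ref{L:Residuo} is false in the generality in which it is quoted (it requires $\alpha$ regular at the point). In one variable, $\alpha=x^{-p}\in K(x^p)$ and $G=1+x$ give $\alpha\,\frac{dG}{G}=\frac{dx}{x^{p}(1+x)}$, whose residue at $0$ is $(-1)^{p-1}=1\neq 0$. In your scheme this particular defect is repairable --- first multiply the relation by a common denominator $D\in K[z^p]$ of the $\lambda_j$, which costs nothing --- but you explicitly flagged the pole case as load-bearing, and it cannot be.
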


\begin{remark}
If $K$ is a perfect field, then every irreducible polynomial in $K[z]$ is non-$\partial$-constant. Now let $K$ be a non perfect field, that is, there is $\lambda \in K$ such that $\lambda$ is not a $p$-power, where $p>0$ is the characteristic of $K$. Then $P(x):=x^p-\lambda$ is an irreducible polynomial in $K[x]$, but $dP=0$.
\end{remark}

%\begin{remark}
%In case of $K=\mathbb R$ or $\mathbb C$, from the logarithmic $1$-form $\eta$ above, by integration we obtain a (multivaluated, analytic or holomorphic) first integral for $\omega$.
%\end{remark}

%\begin{definition}\label{D:Nk}
%Let $K$ be a field of positive characteristic $p>0$, and define $S_K(d)$ to be the $K$-vector space of polynomials of degree less than or equal to $min\{p-1,d\}$. In case that $K$ is a field of characteristic $0$, define $S_K(d)=S_d$ as the $K$-vector space of polynomials of degree less than or equal to $d$. Define
%$$
%\widehat{\Omega}_d^r(K^n):= \Omega_0^r(K^n)\otimes S_K(d)\otimes K(z^p)
%$$
%and
%$$
%N_K(n, d, r):=dim_{K(z^p)}(\widehat{\Omega}_d^r(K^n))\leq dim_K(\Omega_d^r(K^n))
%$$

%\end{definition}

%$$
%\widehat{\Omega}_d^r(K^n):= \Omega_0^r(K^n)\otimes \mathcal M_{min\{d+1,p\}}
%$$

\begin{definition}[\cite{Godbillon}, 2.4 Définition, p. 116]
Let $\omega\in \Omega^r_{K(z)/K}$ be a rational $r$-form. We define the space of {\it tangent $1$-forms} of $\omega$ by
$$
\mathcal E^*(\omega)=\{\eta \in \Omega^1_{K(z)/K}: \omega\wedge\eta=0 \}
$$
\end{definition}

\begin{lemma}[\cite{Correa}, Lemma 2.1] \label{L:Base}
Let $\omega\in \Omega^r_{K(z)/K}$. If there are $r$ elements $\eta_1$,..., $\eta_r\in \mathcal E^*(\omega)$ linearly independent over $K(z)$, then there is $h\in K(z)$ such that
$$
\omega=h\cdot\eta_1\wedge... \wedge \eta_r
$$
\end{lemma}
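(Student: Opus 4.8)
The plan is to reduce the statement to elementary multilinear algebra over the field $K(z)$. First I would observe that $V := \Omega^1_{K(z)/K}$ is a finite-dimensional $K(z)$-vector space (with basis $dz_1, \dots, dz_n$, since $K(z)/K$ is purely transcendental, hence separably generated, so that $\dim_{K(z)} V = n$), and that by the very definition of $\Omega^r$ via exterior products one has canonically $\Omega^r_{K(z)/K} \cong \bigwedge^r_{K(z)} V$. Under this identification the hypotheses read: $\omega \in \bigwedge^r V$, the vectors $\eta_1, \dots, \eta_r \in V$ are $K(z)$-linearly independent, and $\omega \wedge \eta_i = 0$ for each $i$; and the conclusion $\omega = h\,\eta_1 \wedge \dots \wedge \eta_r$ is precisely the assertion that $\omega$ is decomposable along the subspace spanned by the $\eta_i$.

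Next I would complete $\eta_1, \dots, \eta_r$ to a $K(z)$-basis $\eta_1, \dots, \eta_n$ of $V$, which is possible because the $\eta_i$ are independent and $\dim_{K(z)} V = n \ge r$. Writing $\eta_I := \eta_{i_1} \wedge \dots \wedge \eta_{i_r}$ for each strictly increasing index $I = (i_1 < \dots < i_r)$, the family $\{\eta_I\}$ is a basis of $\bigwedge^r V$, so I can expand $\omega = \sum_I c_I\,\eta_I$ with uniquely determined coefficients $c_I \in K(z)$.

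The heart of the argument is to exploit the tangency relations $\omega \wedge \eta_j = 0$ for $j = 1, \dots, r$. For fixed such $j$ one has $\eta_I \wedge \eta_j = 0$ whenever $j \in I$, and $\eta_I \wedge \eta_j = \pm\,\eta_{I \cup \{j\}}$ otherwise; since the map $I \mapsto I \cup \{j\}$ is injective on the indices with $j \notin I$ and produces distinct basis vectors of $\bigwedge^{r+1} V$, the identity $\omega \wedge \eta_j = \sum_{I : j \notin I} \pm c_I\,\eta_{I \cup \{j\}} = 0$ forces $c_I = 0$ for every $I$ not containing $j$. Letting $j$ range over $\{1, \dots, r\}$, the only index surviving all these constraints is the one with $\{1, \dots, r\} \subseteq I$, which, as $|I| = r$, must be $I = \{1, \dots, r\}$. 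Hence $\omega = c_{\{1,\dots,r\}}\,\eta_1 \wedge \dots \wedge \eta_r$, and I take $h = c_{\{1,\dots,r\}} \in K(z)$.

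I expect no genuine obstacle in the core computation: once the problem is phrased as exterior algebra over the field $K(z)$, the wedge-killing argument is purely formal and entirely characteristic-free. The only point meriting care is the preliminary reduction, namely the identification $\Omega^r_{K(z)/K} \cong \bigwedge^r_{K(z)} \Omega^1_{K(z)/K}$ together with the finite-dimensionality of $\Omega^1_{K(z)/K}$ over $K(z)$; both hold because $K(z)$ is a purely transcendental extension of $K$, so nothing specific to positive characteristic interferes.
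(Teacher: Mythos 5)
Your proof is correct and follows essentially the same route as the paper's: complete $\eta_1,\dots,\eta_r$ to a $K(z)$-basis of $\Omega^1_{K(z)/K}$, expand $\omega$ in the induced basis of wedge products, and use the relations $\omega\wedge\eta_j=0$ to annihilate every coefficient except the one attached to $\eta_1\wedge\dots\wedge\eta_r$. The only difference is that you spell out the wedge-killing step (which the paper leaves implicit) and make explicit the identification $\Omega^r_{K(z)/K}\cong\bigwedge^r\Omega^1_{K(z)/K}$ with $\dim_{K(z)}\Omega^1_{K(z)/K}=n$; both are correct and characteristic-free, as you note.
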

\begin{proof}
We can complete $\{\eta_1,..., \eta_r\}$ to make a $K(z)$-basis $\{\eta_1,..., \eta_r,\eta_{r+1}...,\eta_{M}\}$ for $\Omega^r_{K(z)/K}$, where $M=\dim_{K(z)} \Omega^r_{K(z)/K}$. So we can write
$$
\omega=\sum_{1\leq i_1<...<i_r\leq M} h_{i_1...i_r}\cdot \eta_{i_1}\wedge... \wedge \eta_{i_r}
$$

Since $\omega\wedge \eta_i=0$ for $i=1$,..., $r$, we have $\omega=h_{1...r}\cdot \eta_{1}\wedge... \wedge \eta_{r}$.

\end{proof}

%\begin{question} [Combinatorial Problem]
%How can we estimate (or explicit) the number $N_K(n,d,r):=dim_{K(z^p)}(\widehat{\Omega}_d^r(K^n))\leq dim_K(\Omega_d^r(K^n))$?

%\end{question}
%%*******************************************************************************************************************
\section{Darboux Method}

\begin{definition}
Let $F\in K[z]$ and $\omega\in \Omega_d^r(K^n)$ be a polynomial $r$-form of degree $ d$. We say that $F$ is {\it invariant} by $\omega$ if $dF\neq 0$ and $F$ divides $\omega\wedge dF$, that is,
$$
\omega\wedge dF=F\cdot \Theta_F
$$
where $\Theta_F\in \Omega_{d-1}^{r+1}(K^n)$ is called the {\it associated cofactor} of $F$. If $\overline{K}$ is the algebraic closure of $K$, by Hilbert's Nullstellensatz this is equivalent to saying that
$$
(\omega\wedge dF) \mid_V=0
$$
where $V=\{F=0\}$ is the associated algebraic hypersurface in $(\overline{K})^n$. %Algebraically this is equivalent to the condition that $F$ divides $\omega\wedge dF$, in which case we say that $F$ is {\bf invariant} by $\omega$ (Hilbert's Nullstellensatz).

A rational function $f$ is said to be a  {\it rational first integral} for $\omega$ if $df\neq 0$ and $\omega\wedge df=0$, that is, $ df$  is tangent do $ \omega$.

\end{definition}

The following proposition is similar to Proposition 6.1 of  \cite{Santos}.

\begin{prop}\label{P:Log}
Let $\omega\in \Omega^r_{K[z]/K}$ be an $r$-form, where $K$ is an arbitrary field. If there are invariant irreducible polynomials $F_1$,..., $F_m$ in $K[z]$ and $\partial$-constants $\lambda_1$,..., $\lambda_m$ in $K(z^p)^*$ such that
$$
\lambda_1\Theta_{F_1}+...+\lambda_m\Theta_{F_m}=0
$$
then there is a logarithmic $1$-form $\eta\neq 0$ such that $\omega\wedge\eta=0$. %(In this case we say that $\eta$ is tangent to $\omega$.)
\end{prop}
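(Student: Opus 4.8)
The plan is to exhibit the desired $1$-form explicitly as
$$
\eta = \lambda_1\frac{dF_1}{F_1}+\cdots+\lambda_m\frac{dF_m}{F_m},
$$
which is a $\partial$-logarithmic $1$-form by definition, and then to verify the two required properties: that $\omega\wedge\eta=0$ and that $\eta\neq 0$.

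For the first property I would compute $\omega\wedge\eta$ term by term. Since each $F_i$ is invariant by $\omega$, the defining cofactor relation $\omega\wedge dF_i=F_i\cdot\Theta_{F_i}$ gives $\omega\wedge\frac{dF_i}{F_i}=\Theta_{F_i}$. Because the $\lambda_i$ are $\partial$-constants they pull through the exterior product, so
$$
\omega\wedge\eta=\sum_{i=1}^m\lambda_i\Bigl(\omega\wedge\frac{dF_i}{F_i}\Bigr)=\sum_{i=1}^m\lambda_i\Theta_{F_i}=0
$$
by hypothesis. Thus this step is an immediate consequence of the cofactor relation combined with the assumed linear dependence of the cofactors.

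For the second property I would appeal to the Positive Characteristic Jouanolou's Lemma (Lemma \ref{L:Logaritmicas}). The key observation is that the hypotheses place us precisely in its setting: since each $F_i$ is invariant, the definition of invariance forces $dF_i\neq 0$, and hence by Proposition \ref{P:Base} we have $F_i\notin K[z^p]$; that is, every $F_i$ is a non-$\partial$-constant prime. After discarding repetitions (if $F_i=cF_j$ for some $c\in K^*$, then $\frac{dF_i}{F_i}=\frac{dF_j}{F_j}$ and $\Theta_{F_i}=\Theta_{F_j}$, so the corresponding terms may be merged while keeping the coefficients not all zero), the $F_i$ form a finite representative system of distinct non-$\partial$-constant primes. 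Lemma \ref{L:Logaritmicas} then asserts that the $K(z^p)$-linear map $(\lambda_j)\mapsto\sum_j\lambda_j\frac{dF_j}{F_j}$ is injective; since the $\lambda_i$ lie in $K(z^p)^*$ and are therefore not all zero, we conclude $\eta\neq 0$.

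The argument is essentially a direct assembly of the preceding results, so no single hard computation is involved. The only point requiring care is the reduction to distinct primes together with the verification that the $F_i$ are non-$\partial$-constant, which is exactly what licenses the use of Lemma \ref{L:Logaritmicas}; without the observation that invariance yields $dF_i\neq 0$, Jouanolou's injectivity lemma would not apply and one could not exclude the degenerate possibility $\eta=0$.
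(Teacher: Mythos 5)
Your proof is correct and is essentially the paper's own argument: you exhibit $\eta=\lambda_1\frac{dF_1}{F_1}+\cdots+\lambda_m\frac{dF_m}{F_m}$ and verify $\omega\wedge\eta=\sum_{i=1}^m\lambda_i\Theta_{F_i}=0$ directly from the cofactor relation. You are in fact more thorough than the paper, whose proof stops at that computation; your observation that invariance forces $dF_i\neq 0$ (so each $F_i$ is a non-$\partial$-constant prime) and the appeal to Lemma \ref{L:Logaritmicas} to conclude $\eta\neq 0$ (assuming, as the statement implicitly does, that the $F_i$ are pairwise distinct primes) supplies the non-vanishing verification that the paper leaves implicit.
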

\begin{proof}
We can consider the logarithmic $1$-form
$$
\eta=\sum_{i=1}^{m} \lambda_i\cdot \frac{dF_i}{F_i}
$$
Therefore
$$
\omega\wedge\eta=\sum_{i=1}^{m} \lambda_i\omega\wedge \frac{dF_i}{F_i}=\sum_{i=1}^{m} \lambda_i\Theta_{F_i}=0
$$

\end{proof}

\begin{cor}\label{Corol P:log}
Let $\omega\in \Omega_d^r(K^n)$ be a polynomial $r$-form of degree $d$ and $N=N_K(n,d-1,r+1)=\dim_{K(z^p)}(\Omega^{r+1}_{\mathcal L_{d-1}/K(z^p)})$. If $\omega$ has $N+1$ invariant irreducible polynomials, then there is a logarithmic $1$-form $\eta\neq 0$ such that $\omega\wedge\eta=0$.%and $m =N_K(n,d-1,r+1)+1=dim_{K(z^p)}(\Omega^{r+1}_{\mathcal L_{d-1}/K(z^p)})+1$. If $\omega$ has $F_1$,..., $F_m$ irreducible  invariant polynomials, then there is a logarithmic $1$-form $\eta\neq 0$ such that $\omega\wedge\eta=0$.
\end{cor}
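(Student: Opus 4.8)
The plan is to exploit the defining feature of the number $N$: it is the dimension over the field of $\partial$-constants $K(z^p)$ of the space into which every cofactor of $p$-degree $d-1$ naturally embeds. Having $N+1$ invariant polynomials will produce $N+1$ cofactors lying in an $N$-dimensional $K(z^p)$-vector space, forcing a nontrivial $K(z^p)$-linear relation among them, which is exactly the hypothesis of Proposition \ref{P:Log}.

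First I would attach to each invariant irreducible polynomial $F_i$ (for $i=1,\dots,N+1$) its associated cofactor $\Theta_{F_i}\in\Omega^{r+1}_{d-1}(K^n)$, obtained from $\omega\wedge dF_i=F_i\cdot\Theta_{F_i}$. Because each $F_i$ is invariant we have $dF_i\neq 0$, so each $F_i$ is a non-$\partial$-constant irreducible polynomial, and the $F_i$ are taken pairwise distinct; both facts will be used at the end to guarantee $\eta\neq 0$ through Lemma \ref{L:Logaritmicas}.

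Next I would invoke the natural inclusion $\Omega^{r+1}_{d-1}(K^n)\hookrightarrow\Omega^{r+1}_{\mathcal L_{s(d-1)}/K(z^p)}=\Omega^{r+1}_{\mathcal L_{d-1}/K(z^p)}$ recorded just after Definition \ref{D:NK}, and regard the $N+1$ cofactors $\Theta_{F_1},\dots,\Theta_{F_{N+1}}$ as vectors in the $K(z^p)$-vector space $\Omega^{r+1}_{\mathcal L_{d-1}/K(z^p)}$, whose dimension is precisely $N=N_K(n,d-1,r+1)$ by Definition \ref{D:NK}. Since $N+1$ vectors in an $N$-dimensional space are linearly dependent, there exist $\lambda_1,\dots,\lambda_{N+1}\in K(z^p)$, not all zero, with $\sum_i\lambda_i\Theta_{F_i}=0$.

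Finally, after discarding the indices where $\lambda_i=0$ (so that the surviving coefficients lie in $K(z^p)^*$), I would apply Proposition \ref{P:Log} to the surviving invariant polynomials with their $\partial$-constant coefficients to produce the desired logarithmic $1$-form $\eta$ with $\omega\wedge\eta=0$. There is no genuinely hard step here, since the argument is a dimension count; the point that needs care is that the linear dependence must be taken over $K(z^p)$ rather than over $K$. This is exactly why $N$ is defined as a $K(z^p)$-dimension, and it is what forces the coefficients to be $\partial$-constants so that the resulting $\eta$ is closed and Proposition \ref{P:Log} applies. The nonvanishing $\eta\neq 0$ is then automatic, since the $F_i$ are distinct non-$\partial$-constant primes and so the logarithmic map of Lemma \ref{L:Logaritmicas} is injective on the surviving nonzero coefficients.
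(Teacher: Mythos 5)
Your proof is correct and follows essentially the same route as the paper: embed the $N+1$ cofactors into the $N$-dimensional $K(z^p)$-vector space $\Omega^{r+1}_{\mathcal L_{d-1}/K(z^p)}$, extract a nontrivial $K(z^p)$-linear dependence by dimension count, and apply Proposition \ref{P:Log}. You are in fact slightly more careful than the paper's own proof, which leaves implicit both the discarding of zero coefficients and the appeal to Lemma \ref{L:Logaritmicas} (with the observation that invariant polynomials are non-$\partial$-constant) to guarantee $\eta\neq 0$.
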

\begin{proof}
By hypothesis, $\omega$ has $F_1$,..., $F_m$ as invariant irreducible polynomials, where $m=N+1$. The $m$ associated cofactors $\Theta_{F_1}$,..., $\Theta_{F_m}$ are elements of $\Omega^{r+1}_{\mathcal L_{d-1}/K(z^p)}= \Omega_0^{r+1}(K^n)\otimes_K \mathcal M_{s(d)}$ and since $m=\dim_{K(z^p)}(\Omega^{r+1}_{\mathcal L_{d-1}/K(z^p)})+1$ then there exist $\lambda_1$,..., $\lambda_m \in K(z^p)$, not all zero, such that
$$
\lambda_1 \Theta_{F_1}+...+\lambda_m \Theta_{F_m}=0
$$
By Proposition \ref{P:Log},  the logarithmic $1$-form $\eta:=\lambda_1 \frac{dF_1}{F_1}+...+\lambda_m \frac{dF_m}{F_m}$ satisfies
$$
\omega\wedge \eta=\lambda_1 \Theta_{F_1}+...+\lambda_m \Theta_{F_m}=0
$$

\end{proof}

\begin{example}
Consider the $2$-form $\omega=\alpha x dy\wedge dz+\beta y dz\wedge dx+\gamma z dx\wedge dy$, where $\alpha$, $\beta$, $\gamma$ are $\partial$-constants in $K(z^p)^*$. We can easily see that $F=x$, $G=y$ and $H=z$ are invariant. The associated cofactors are $\Theta_x=\omega\wedge \frac{dx}{x}=\alpha dx\wedge dy \wedge dz$, $\Theta_y=\omega\wedge \frac{dy}{y}=\beta dx\wedge dy \wedge dz$ and $\Theta_z=\omega\wedge \frac{dz}{z}=\gamma dx\wedge dy \wedge dz$. If $\lambda_1$, $\lambda_2$, $\lambda_3$ are $\partial$-constants not all zero satisfying $\alpha\cdot \lambda_1+\beta\cdot \lambda_2+\gamma\cdot \lambda_3=0$ then
$$
\lambda_1\Theta_x+\lambda_2\Theta_y+\lambda_3\Theta_z=0
$$
and the  logarithmic $1$-form $\eta=\lambda_1 \frac{dx}{x}+\lambda_2 \frac{dy}{y}+\lambda_3 \frac{dz}{z}$ is tangent to $\omega$.
\end{example}

\begin{definition}
Let $\omega\in \Omega^r_{K(z)/K}$ be a rational $r$-form. The {\it $p$-degree} of $\omega$ is the smaller $\epsilon$ such that $\omega\in \Omega_{\mathcal L_{\epsilon}/K(z^p)}^{r}$.
\end{definition}

\begin{remark}
In Corollary  \ref{Corol P:log} it is not possible to exchange  
$\Omega_{\mathcal L_{d}/K(z^p)}^{r}=\Omega_{\mathcal L_{s(d)}/K(z^p)}^{r}$ by $\Omega_{\mathcal L_{\epsilon}/K(z^p)}^{r}$, where $\epsilon$ is the $p$-degree  of  $ \omega$, because in general we have $\epsilon < s(d)=\min\{d,n(p-1)\}$ and the proof of the corollary will fail.  
\end{remark}

\begin{prop}\label{P:porco}

Let $\omega\in \Omega^r_{K[z]/K}$ be a polynomial $r$-form. If there are invariant irreducible polynomials $F_1$,..., $F_m$ in $K[z]$ and constants $\delta_1$,..., $\delta_m$ in $\mathbb Z$ (or in the prime sub-field $\mathbb Z_p$ of $K$ if char$(K)=p>0$) not all zero, such that
$$
\delta_1\Theta_{F_1}+...+\delta_m\Theta_{F_m}=0
$$
then $\omega$ has a rational (resp. polynomial) first integral.

\end{prop}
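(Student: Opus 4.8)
The plan is to upgrade the given linear dependence among the cofactors, which a priori only produces a tangent logarithmic form via Proposition~\ref{P:Log}, into an exact logarithmic derivative $df/f$ of a genuine rational (resp.\ polynomial) function; the decisive point is that the coefficients $\delta_i$ are \emph{integers}, so the would-be logarithmic form actually integrates. First I would form $\eta=\sum_{i=1}^{m}\delta_i\,\frac{dF_i}{F_i}$. Because the image of $\mathbb{Z}$ (resp.\ of $\mathbb{Z}_p$ when $\mathrm{char}\,K=p>0$) inside $K$ lies in the field of $\partial$-constants $K(z^p)$, each $\delta_i$ is a legitimate coefficient for Proposition~\ref{P:Log}, which therefore gives $\omega\wedge\eta=\sum_{i=1}^{m}\delta_i\Theta_{F_i}=0$.

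The key step is to recognize $\eta$ as a logarithmic derivative. Set $f=\prod_{i=1}^{m}F_i^{\delta_i}\in K(z)^*$. Since the exponents are integers, the Leibniz rule (valid in every characteristic) gives $df=f\cdot\sum_{i=1}^{m}\delta_i\frac{dF_i}{F_i}=f\eta$, that is $\eta=df/f$. Wedging with $\omega$ and invoking the previous step produces $\omega\wedge df=f\,(\omega\wedge\eta)=0$. It then remains only to check that $f$ is a \emph{nonconstant} first integral, i.e.\ that $df\neq 0$. Here I would argue as follows: by definition an invariant polynomial satisfies $dF_i\neq 0$, so each $F_i$ is non-$\partial$-constant, and we may take the $F_i$ pairwise distinct; since not all $\delta_i$ vanish, the Positive Characteristic Jouanolou Lemma~\ref{L:Logaritmicas} forces $\eta\neq 0$; and as $f\neq 0$ while $\Omega^1_{K(z)/K}$ is a $K(z)$-vector space, $df=f\eta\neq 0$. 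Thus $df\neq 0$ and $\omega\wedge df=0$, so $f$ is a first integral.

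The parenthetical refinement then falls out of the bookkeeping on exponents. When $\mathrm{char}\,K=p>0$ and the $\delta_i$ are taken in $\mathbb{Z}_p$, I would lift each to its representative in $\{0,\dots,p-1\}$; all exponents are then non-negative and $f=\prod_i F_i^{\delta_i}$ is an honest polynomial, yielding a polynomial first integral. In the general integer case negative exponents may occur and $f$ is only rational. The hard part, and the only place where positive characteristic genuinely intervenes, is the verification $df\neq 0$: over an imperfect field a product of non-constant irreducible polynomials can in principle be a $p$-th power and hence a $\partial$-constant (cf.\ Proposition~\ref{P:Base}, by which $df=0\iff f\in K(z^p)$), so without the injectivity supplied by Lemma~\ref{L:Logaritmicas} nothing would rule out the degenerate outcome $\eta=0$, $df=0$, leaving only a trivial first integral.
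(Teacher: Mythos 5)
Your proposal is correct and follows essentially the same route as the paper: form $\eta=\sum_i\delta_i\,dF_i/F_i$, observe $\omega\wedge\eta=\sum_i\delta_i\Theta_{F_i}=0$, and integrate $\eta$ to $f=\prod_i F_i^{\delta_i}$ with $df=f\eta$, hence $\omega\wedge df=0$. The only difference is that you additionally verify the nondegeneracy condition $df\neq 0$ via the injectivity of Lemma~\ref{L:Logaritmicas} (resp.\ Lemma~\ref{L:Jouanolou} in characteristic zero), a point the paper's proof leaves implicit; this is a worthwhile refinement, since over an imperfect field $f\in K(z^p)$ cannot be excluded by unique factorization alone, but it does not change the underlying argument.
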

\begin{proof}
If we take the $1$-form $\eta=\delta_1\frac{dF_1}{F_1}+...+\delta_m\frac{dF_m}{F_m}$, then
$$
\omega\wedge \eta=\sum_{i=1}^m \delta_i\omega\wedge\frac{dF_i}{F_i}=\sum_{i=1}^m\delta_i\Theta_{F_i}=0
$$
Since $dF_i^{\delta_i}=\delta_iF_i^{\delta_i}\frac{dF_i}{F_i}$, then the rational function (or polynomial) $f=F_1^{\delta_1}...F_m^{\delta_m}$ is such that  $df=f\eta$ and then
$$
\omega\wedge df=f\omega\wedge \eta=0
$$
that is, $f$ is a first integral for $\omega$.
\end{proof}

%***********************************************************************************************

\section{Darboux-Jouanolou Criterion}

\begin{lemma}\label{L:Dependence}
Let $\omega\in\Omega^r_{K(z)/K}$ be a rational $r$-form and let $\eta_1$,..., $\eta_r \in \mathcal E^*(\omega)$ be $K(z^p)$-linearly independent  closed  $1$-forms. If $\eta_1\wedge ... \wedge \eta_r=0$, then $\omega$ has a rational first integral.
\end{lemma}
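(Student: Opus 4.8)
The plan is to read the vanishing $\eta_1\wedge\cdots\wedge\eta_r=0$ as a statement of $K(z)$-linear dependence, and to play this against the hypothesis of $K(z^p)$-linear independence in order to manufacture a non-$\partial$-constant coefficient which turns out to be tangent to $\omega$.

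First I would observe that, since $\Omega^1_{K(z)/K}$ is a $K(z)$-vector space and $\eta_1\wedge\cdots\wedge\eta_r=0$, the forms $\eta_1,\dots,\eta_r$ are linearly dependent over $K(z)$. Choose a $K(z)$-linear relation $\sum_{i\in S}a_i\eta_i=0$ with $a_i\in K(z)^*$ and with $|S|$ minimal; after reindexing and rescaling I may assume $S=\{1,\dots,s\}$ and $a_1=1$. Minimality forces $\eta_2,\dots,\eta_s$ to be $K(z)$-linearly independent, and $s\geq 2$ since each $\eta_i\neq 0$ (the $\eta_i$ being $K(z^p)$-independent). The crucial point is that not every $a_i$ can be a $\partial$-constant: were $a_2,\dots,a_s\in K(z^p)$, then $\eta_1+\sum_{i\geq 2}a_i\eta_i=0$ would be a nontrivial $K(z^p)$-relation (the coefficient of $\eta_1$ being $1$), contradicting the assumed $K(z^p)$-linear independence. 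Hence some index $j\geq 2$ satisfies $da_j\neq 0$, by the characterization $K(z^p)=\{f\in K(z):df=0\}$ of Proposition \ref{P:Base}.

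Next I would differentiate the relation. Applying $d$ and using that every $\eta_i$ is closed, the terms $a_i\,d\eta_i$ vanish and $da_1=0$, leaving
$$\sum_{i=2}^{s} da_i\wedge\eta_i=0.$$
Here lies the technical heart of the argument. Since $\eta_2,\dots,\eta_s$ are $K(z)$-linearly independent, I would complete them to a $K(z)$-basis of $\Omega^1_{K(z)/K}$, expand each $da_i$ in this basis, substitute into the displayed identity, and compare the coefficients of the linearly independent $2$-forms; this is precisely the classical Cartan lemma, and it shows that every $da_i$ lies in the $K(z)$-span of $\eta_2,\dots,\eta_s$. Because each $\eta_i\in\mathcal E^*(\omega)$ and $\mathcal E^*(\omega)$ is a $K(z)$-subspace of $\Omega^1_{K(z)/K}$, it follows that $da_i\in\mathcal E^*(\omega)$, i.e.\ $\omega\wedge da_i=0$, for every $i$.

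Finally I would conclude by setting $f=a_j$ for the index $j$ produced above. Then $df=da_j\neq 0$, so $f$ is genuinely non-$\partial$-constant, and $\omega\wedge df=\omega\wedge da_j=0$, exhibiting $f$ as a rational first integral of $\omega$. The only delicate step is the coefficient comparison yielding $da_i\in\mathrm{span}(\eta_2,\dots,\eta_s)$; everything else is bookkeeping, the decisive conceptual input being the contrast between $K(z)$-dependence (forced by the vanishing of the wedge) and $K(z^p)$-independence (forced by hypothesis), which is exactly what guarantees that the first integral is nonconstant.
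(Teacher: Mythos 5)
Your proof is correct and takes essentially the same route as the paper's: extract a $K(z)$-linear relation from the vanishing wedge, use the $K(z^p)$-independence to produce a coefficient $a_j$ with $da_j\neq 0$, differentiate the relation using closedness, and deduce that $da_j$ lies in the $K(z)$-span of the $\eta_i$, hence in $\mathcal E^*(\omega)$. The only cosmetic difference is that you invoke Cartan's lemma where the paper multiplies the differentiated relation by $\eta_1\wedge\cdots\wedge\widehat{\eta_{j_0}}\wedge\cdots\wedge\eta_m$ to isolate $df_{j_0}\wedge\eta_1\wedge\cdots\wedge\eta_m=0$; this is the same linear algebra.
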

\begin{proof}
Since $\eta_1\wedge ... \wedge \eta_r=0$, then  $\eta_1$,..., $\eta_r$ are linearly dependent over $K(z)$. Let $m<r$ be the largest positive integer such that $\eta_1$,..., $\eta_m$ are linearly independent over $K(z)$. Then
$$
\eta_{m+1}=\sum_{i=1}^m f_i\cdot \eta_i
$$
with $f_1$,..., $f_m\in K(z)$.

The $1$-forms $\eta_1$,..., $\eta_m$, $\eta_{m+1}$ are $K(z^p)$-linearly independent, hence there exists $j_0\in \{1,...,m\}$ such that $df_{j_0}\neq 0$, that is, $f_{j_0}\in K(z)-K(z^p)$

By hypothesis, $\eta_i$ is closed for $i=1,...,m+1$, then
$$
0=\sum_{i=1}^m df_i\wedge \eta_i
$$
Multiplying the above expression by $\eta_1\wedge...\wedge \widehat{\eta_{j_0}}\wedge ...\wedge \eta_m$, we obtain
$$
0=\sum_{i=1}^m df_i\wedge \eta_i\wedge\eta_1\wedge ... \wedge\widehat{\eta_{j_0}}\wedge...\wedge \eta_m=(-1)^{{j_0}+1}df_{j_0}\wedge \eta_1\wedge ...\wedge \eta_m
$$

Since $\eta_1$,..., $\eta_m$ are $K(z)$-linearly independent, then there exist $g_1$,..., $g_m\in K(z)$ such that $df_{j_0}=\sum_{i=1}^m g_i\cdot \eta_i$ and, by hypothesis, for each $j=1$,..., $m$,   $ \omega \wedge \eta_j=0$. Therefore
$$
\omega\wedge df_{j_0}=\sum_{i=1}^m g_i \cdot \omega\wedge \eta_i=0,
$$
that is, $f_{j_0}$ is a rational first integral for $\omega$.

\end{proof}

\begin{lemma}\label{L:Independence}
Let $\omega\in\Omega^r_{K(z)/K}$ be a rational $r$-form, with $ 1 \leq r <n$, and let $\eta_1$,..., $\eta_r$, $\eta_{r+1}$, $\in \mathcal E^*(\omega)$ be $K(z^p)$-linearly independent closed $1$-forms. If $\eta_1\wedge ... \wedge \eta_r\neq 0$ and $\eta_2\wedge ... \wedge \eta_{r+1}\neq 0$, then $\omega$ has a rational first integral.
\end{lemma}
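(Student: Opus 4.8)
The plan is to mimic the mechanism of Lemma~\ref{L:Dependence}, but now starting from the fact that $\eta_1,\dots,\eta_r$ already form a ``frame'' for $\omega$. First I would use $\eta_1\wedge\cdots\wedge\eta_r\neq 0$ to conclude that $\eta_1,\dots,\eta_r$ are linearly independent over $K(z)$, so Lemma~\ref{L:Base} produces $h\in K(z)$ with $\omega=h\,\eta_1\wedge\cdots\wedge\eta_r$. Assuming $\omega\neq 0$ (otherwise any non-$\partial$-constant coordinate is a first integral at once) we get $h\neq 0$. Since $\eta_{r+1}\in\mathcal E^*(\omega)$ means $\omega\wedge\eta_{r+1}=0$, this forces $\eta_1\wedge\cdots\wedge\eta_{r+1}=0$; as the first $r$ forms are $K(z)$-independent, $\eta_{r+1}$ lies in their $K(z)$-span, say $\eta_{r+1}=\sum_{i=1}^r f_i\,\eta_i$ with $f_i\in K(z)$.

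Next I would exploit the two distinct notions of independence. The hypothesis $\eta_2\wedge\cdots\wedge\eta_{r+1}\neq 0$ guarantees $f_1\neq 0$: if $f_1=0$ then $\eta_{r+1}\in\operatorname{span}_{K(z)}\{\eta_2,\dots,\eta_r\}$, contradicting that $\eta_2,\dots,\eta_{r+1}$ are $K(z)$-independent, so we are genuinely in the case complementary to Lemma~\ref{L:Dependence}. More importantly, because $\eta_1,\dots,\eta_{r+1}$ are $K(z^p)$-linearly independent, the coefficients $f_i$ cannot all be $\partial$-constants: otherwise $\eta_{r+1}-\sum_i f_i\eta_i=0$ would be a nontrivial $K(z^p)$-relation. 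Hence there is an index $j_0$ with $f_{j_0}\in K(z)-K(z^p)$, that is $df_{j_0}\neq 0$.

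Finally I would differentiate. Each $\eta_i$ is closed, so
$$
0=d\eta_{r+1}=\sum_{i=1}^r df_i\wedge\eta_i .
$$
Wedging this identity with $\eta_1\wedge\cdots\wedge\widehat{\eta_{j_0}}\wedge\cdots\wedge\eta_r$ kills every term except the $j_0$-th (each surviving summand would repeat some $\eta_i$), giving $df_{j_0}\wedge\eta_1\wedge\cdots\wedge\eta_r=0$. Since $\eta_1,\dots,\eta_r$ are $K(z)$-independent, this places $df_{j_0}$ in their $K(z)$-span, $df_{j_0}=\sum_i g_i\,\eta_i$, whence $\omega\wedge df_{j_0}=\sum_i g_i(\omega\wedge\eta_i)=0$ because every $\eta_i\in\mathcal E^*(\omega)$. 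As $df_{j_0}\neq 0$, the function $f_{j_0}$ is the desired rational first integral.

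The step I expect to be the crux is the passage between the two independences: the representation $\eta_{r+1}=\sum f_i\eta_i$ only makes sense over $K(z)$ (via $\eta_1\wedge\cdots\wedge\eta_r\neq 0$ together with $h\neq 0$), while the existence of a coefficient with nonvanishing differential — the whole point, since a $\partial$-constant is never a valid first integral — must come from the stronger $K(z^p)$-independence. Keeping these two fields apart, and checking that the wedging isolates exactly one closed-form term, is where care is needed; the exterior-algebra bookkeeping itself is routine once $j_0$ is fixed.
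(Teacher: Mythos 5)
Your proof is correct, but it follows a genuinely different route from the paper's. The paper applies Lemma~\ref{L:Base} \emph{twice} --- to $\tau_1=\eta_1\wedge\cdots\wedge\eta_r$ and $\tau_2=\eta_2\wedge\cdots\wedge\eta_{r+1}$, both nonzero by hypothesis --- writes $\tau_1=f\cdot\tau_2$ where $f=h_1/h_2$ is the ratio of the two proportionality factors, deduces $\omega\wedge df=0$ from $d\tau_1=d\tau_2=0$, and then splits into cases: if $f\in K(z)-K(z^p)$ it is the desired first integral, while if $f\in K(z^p)$ the forms $\mu:=\eta_1+(-1)^r f\,\eta_{r+1}$, $\eta_2$, \dots, $\eta_r$ are closed, $K(z^p)$-independent, tangent to $\omega$, and have vanishing wedge, so Lemma~\ref{L:Dependence} finishes. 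You instead use Lemma~\ref{L:Base} only once: from $\omega=h\,\eta_1\wedge\cdots\wedge\eta_r$ with $h\neq 0$ and $\omega\wedge\eta_{r+1}=0$ you place $\eta_{r+1}=\sum_i f_i\eta_i$ in the $K(z)$-span of the first $r$ forms, extract a coefficient $f_{j_0}\notin K(z^p)$ from the $K(z^p)$-independence, and rerun the differentiation-and-wedge mechanism of the proof of Lemma~\ref{L:Dependence} inline, with no case analysis at all. Your route is leaner in two respects: it never actually uses the hypothesis $\eta_2\wedge\cdots\wedge\eta_{r+1}\neq 0$ (your observation that $f_1\neq 0$ is never needed afterwards), so you in fact prove a stronger statement assuming only $\eta_1\wedge\cdots\wedge\eta_r\neq 0$ plus one extra $K(z^p)$-independent closed tangent form; and you dispose of the degenerate case $\omega=0$ explicitly, which the paper's proof passes over silently when it inverts the factor furnished by Lemma~\ref{L:Base}. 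What the paper's symmetric two-decomposition argument buys in exchange is an explicit candidate integral, the ratio $h_1/h_2$ of the two frames, with the fallback to Lemma~\ref{L:Dependence} isolating exactly the positive-characteristic obstruction (that ratio being a $\partial$-constant); your argument absorbs that obstruction into the choice of $j_0$ instead.
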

\begin{proof}
By hypothesis $\{\eta_1,...,\eta_r\}$ and $\{\eta_2,...,\eta_{r+1}\}$ are two sets of $K(z)$-linearly independent closed $1$-forms in $\mathcal E^*(\omega)$. Write $\tau_1=\eta_1\wedge ... \wedge \eta_r$ and $\tau_2=\eta_2\wedge ... \wedge \eta_{r+1}$.%, we have distinct polar sets $\mid \tau_1 \mid_{\infty} \neq \mid \tau_2 \mid_{\infty}$.% and
%$$
%\tau_1-\tau_2=(\eta_1+(-1)^r\eta_{r+1})\wedge\eta_2\wedge...\wedge \eta_r
%$$

%The $r$-forms $\tau_1$ and $\tau_2$ are $K(z^p)$-linearly independent.
By Lemma \ref{L:Base} there exist rational functions $h_1$ and $h_2$ such that $\tau_i=h_i\cdot \omega$, $i=1,2$. Writing $f=\frac{h_1}{h_2}$, we have $\tau_1={f} \cdot \tau_2
$. Hence
$$
0=\tau_1-\tau_1=\tau_1-{f} \cdot \tau_2=(\eta_1+(-1)^rf \cdot\eta_{r+1})\wedge\eta_2\wedge...\wedge \eta_r
$$
Moreover, using that $d\tau_1=d\tau_2=0$, we have
$$
0=d\tau_1=df\wedge \tau_2=h_2df\wedge \omega
$$
%$$
%\omega\wedge df=\frac{\tau_2\wedge df}{h_2}=\frac{(-1)^{r}d\tau_1}{h_2}=0
%$$
and then $\omega\wedge df=0$.

Now we have two possibilities for this rational function: $f \in K(z)-K(z^p)$ or $f \in K(z^p)$. If $f \in K(z)-K(z^p)$, then $f$ is a rational first integral for $\omega$. On the other hand, suppose that  $f \in K(z^p)$. By hypothesis, the closed $1$-forms $\eta_1$, $\eta_2$,..., $\eta_r \in \mathcal E^*(\omega)$ are $K(z^p)$-linearly independent, which imply that  $\mu:=\eta_1+(-1)^rf\cdot \eta_{r+1}$, $\eta_2$,..., $\eta_r \in \mathcal E^*(\omega)$ are also $K(z^p)$-linearly independent closed $1$-forms. Then we conclude the existence of rational first integral for $\omega$ by Lemma \ref{L:Dependence}.

%(Alternatively we can argue directly as follows. First $\eta_1+(-1)^r f \cdot\eta_{r+1}=\sum_{i=2}^r f_i \cdot \eta_i$ for rational functions $f_2$,..., $f_r\in K(z)$, that is, if $f_{r+1}=(-1)^{r+1} f$, then $\eta_1=\sum_{i=2}^{r+1} f_i \cdot \eta_i$ and  $f_{r+1}$ is not a $ \partial$-constant. As in the  proof of  Lemma \ref{L:Dependence} we can get $df_{j_0}\wedge \eta_2\wedge ... \wedge \eta_{r+1}=0$ and, in a similar way, we can proof that   $\omega \wedge df_{j_0}=0 $, that is,  $ f_{j_0}$ is a rational first integral for $\omega$.)

\end{proof}

\begin{thm}[Theorem A]\label{T:Main}
Let $\omega\in \Omega^r_{K[z]/K}$ be a polynomial $r$-form of degree $d$ and $N=\dim_{K(z^p)}(\Omega_{\mathcal L_{d-1}/K(z^p)}^{r+1})$. If $\omega$ admits $N+r+1$ invariant polynomials, then $\omega$ has a rational first integral.

\end{thm}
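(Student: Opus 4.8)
The plan is to manufacture $r+1$ closed, $K(z^p)$-linearly independent logarithmic $1$-forms lying in $\mathcal E^*(\omega)$, and then feed them into Lemmas \ref{L:Dependence} and \ref{L:Independence}. I take $F_1,\dots,F_{N+r+1}$ to be the invariant polynomials, which I assume distinct and irreducible (the setting of invariant hypersurfaces). Since each $F_i$ is invariant we have $dF_i\neq 0$, so the $F_i$ are non-$\partial$-constant primes and the injectivity in Lemma \ref{L:Logaritmicas} is available for them. The degenerate top-degree case $r=n$ I would dispose of at once: then $\omega\wedge\eta$ is an $(n+1)$-form and vanishes identically, so $\omega\wedge df=0$ for every $f$ and any nonconstant coordinate is already a first integral; hence I may assume $1\leq r<n$, as required by Lemma \ref{L:Independence}.

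The first real step is to run the counting behind Corollary \ref{Corol P:log}, but retaining the full relation space rather than a single relation. Each cofactor $\Theta_{F_i}$ has degree $d-1$, so it lies in $\Omega^{r+1}_{\mathcal L_{d-1}/K(z^p)}$, a $K(z^p)$-vector space of dimension $N$. Consequently the space of linear relations
$$
R=\Big\{(\lambda_i)\in K(z^p)^{N+r+1}:\ \sum_{i=1}^{N+r+1}\lambda_i\Theta_{F_i}=0\Big\}
$$
has dimension $\dim_{K(z^p)}R\geq (N+r+1)-N=r+1$. For each $(\lambda_i)\in R$, Proposition \ref{P:Log} gives $\omega\wedge\eta=\sum_i\lambda_i\Theta_{F_i}=0$ for $\eta=\sum_i\lambda_i\,\tfrac{dF_i}{F_i}$, so $\eta\in\mathcal E^*(\omega)$, and $\eta$ is closed because it is logarithmic. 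The assignment $(\lambda_i)\mapsto\sum_i\lambda_i\,\tfrac{dF_i}{F_i}$ is $K(z^p)$-linear and injective by Lemma \ref{L:Logaritmicas}, so the image of $R$ has dimension $\geq r+1$. Choosing a basis of an $(r+1)$-dimensional subspace yields closed $1$-forms $\eta_1,\dots,\eta_{r+1}\in\mathcal E^*(\omega)$ that are $K(z^p)$-linearly independent.

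With these in hand I would argue by a \emph{wedge dichotomy}. If $\eta_1\wedge\cdots\wedge\eta_r=0$, then $\eta_1,\dots,\eta_r$ are $K(z^p)$-linearly independent closed forms in $\mathcal E^*(\omega)$ with vanishing wedge, and Lemma \ref{L:Dependence} produces a rational first integral. If $\eta_1\wedge\cdots\wedge\eta_r\neq 0$ but $\eta_2\wedge\cdots\wedge\eta_{r+1}=0$, the same Lemma \ref{L:Dependence} applied to the $r$ forms $\eta_2,\dots,\eta_{r+1}$ does the job. Finally, if both $\eta_1\wedge\cdots\wedge\eta_r\neq 0$ and $\eta_2\wedge\cdots\wedge\eta_{r+1}\neq 0$, then Lemma \ref{L:Independence} applies directly and again yields a rational first integral. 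Each sub-collection entering a lemma is $K(z^p)$-linearly independent because it is a subset of the independent family $\{\eta_1,\dots,\eta_{r+1}\}$, and each is closed and tangent by construction.

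Since every branch of the dichotomy delivers a rational first integral, the theorem follows. The substantive analytic content is already packaged inside Lemmas \ref{L:Dependence} and \ref{L:Independence}; here the genuine work is the dimension bookkeeping that guarantees \emph{exactly} $r+1$ independent closed tangent logarithmic forms, which hinges on combining the relation-space count with the injectivity of Lemma \ref{L:Logaritmicas}. I expect the main obstacle to be organizing the case analysis cleanly so that the independence and closedness hypotheses of both lemmas are verified in every branch, rather than any single deep computation.
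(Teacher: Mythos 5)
Your proof is correct, and its skeleton is the same as the paper's: manufacture closed logarithmic $1$-forms tangent to $\omega$ out of $K(z^p)$-linear relations among the cofactors, get their $K(z^p)$-independence from the injectivity in Lemma \ref{L:Logaritmicas}, and conclude via the dichotomy of Lemmas \ref{L:Dependence} and \ref{L:Independence}. Where you genuinely differ is in how the $r+1$ independent forms are extracted. The paper slides a window through the cofactors: for each $s=1,\dots,r+1$ it takes the $N+1$ consecutive cofactors $\Theta_{F_s},\dots,\Theta_{F_{s+N}}$, picks a relation normalized so that $\lambda_s^s\neq 0$, and deduces independence of $\eta_1,\dots,\eta_{r+1}$ from this echelon structure together with Lemma \ref{L:Logaritmicas}. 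You instead work with the full relation space $R\subset K(z^p)^{N+r+1}$, observe $\dim_{K(z^p)}R\geq r+1$ by rank--nullity, and transport a basis of $R$ through the injective logarithmic map. Your route buys two concrete improvements. First, it is more robust: the paper's claim that one can always arrange $\lambda_s^s\neq 0$ is not justified as stated (if $\Theta_{F_s}$ lies outside the span of $\Theta_{F_{s+1}},\dots,\Theta_{F_{s+N}}$, then \emph{every} relation supported on that window has $\lambda_s^s=0$; fixing this requires reordering the $F_i$, which is delicate with overlapping windows), whereas injectivity of the map in Lemma \ref{L:Logaritmicas} turns any basis of $R$ into an independent family with no normalization at all. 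Second, you dispose of the top-degree case $r=n$ at the outset, which is necessary before invoking Lemma \ref{L:Independence} (stated only for $1\leq r<n$) and which the paper passes over in silence.
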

\begin{proof}
Suppose that the polynomial $r$-form $\omega$ admits $N+r+1$ invariant irreducible polynomials $F_1$,..., $F_{N+r+1}$. Then, for each $i\in \{1,2,...,N+r+1\}$, there exists $\Theta_{F_i}\in\Omega_{d-1}^{r+1}(K^n)$ such that
$$
\omega\wedge dF_i=F_i\cdot \Theta_{F_i}
$$

Since $\dim_{K(z^p)}(\Omega_{\mathcal L_{d-1}/K(z^p)}^{r+1})=N$, for each $s=1$,..., $r+1$, we can choose $\lambda_s^s$,..., $\lambda_{s+N}^s\in K(z^p)$ such that $\lambda_s^s\neq 0$ and
$$
\sum_{j=s}^{s+N} \lambda_j^s\cdot \Theta_{F_j}=0
$$

For each $s=1$,..., $r+1$, we can define the logarithmic  $1$-form
$$
\eta_s=\sum_{j=s}^{s+N} \lambda_j^s\cdot \frac{dF_j}{F_j}
$$
and then $\omega\wedge \eta_s=0$.

Define
$$
\tau_1=\eta_1\wedge ... \wedge \eta_r
$$
and
$$
\tau_2=\eta_2\wedge ... \wedge \eta_{r+1}
$$

By Lemma \ref{L:Logaritmicas}, $\{ \eta_1, ..,\eta_r \}$ and $\{ \eta_2, ..,\eta_{r+1} \}$ are two sets of logarithmic $1$-forms linearly independent over $K(z^p)$. Then:

\begin{itemize}
\item[(i)] if $\tau_1=0$ or $\tau_2=0$, we obtain a rational first integral for $\omega$ by Lemma \ref{L:Dependence};

\item[(ii)] if $\tau_1\neq 0$ and  $\tau_2\neq 0$, we obtain a rational first integral for $\omega$ by Lemma \ref{L:Independence}.

\end{itemize}

\end{proof}

We say that a polynomial $F$ is {\it invariant} by the vector field $X$ if $dF\neq 0$ and $X(F)$ is divisible by $F$. If $f$ is a rational function, we say that $f$ is a {\it rational first integral} for $X$ if $df\neq 0$ and $X(f)=0$.

We can associate to the vector field $X=\sum_{i=1}^n P_i \frac{\partial}{\partial z_i}$ the $(n-1)$-form
$$
\omega_X=\sum_{i=1}^n (-1)^{i+1} P_i dz_1\wedge ...\wedge \widehat{dz_i}\wedge ... \wedge dz_n
= i_X(dz_1 \wedge...\wedge dz_n),
$$
that is, $\omega_X$ is the {\it rotational} of $X$. Since
$$
\omega_X\wedge df=X(f)dz_1\wedge ... \wedge dz_n
$$
we have that $f$ (or $F$) is a rational first integral (or an invariant polynomial) for $X$ if and only if it is the same for $\omega_X$.

\begin{cor}\label{Cor: Field}
Let $X$ be a polynomial vector field of degree $d$ on $K^n$ and $N=\dim_{K(z^p)}(\Omega_{\mathcal L_{d-1}/K(z^p)}^{n})$. If $X$ admits $N+n$ invariant irreducible polynomials, then $X$ has a rational first integral.
\end{cor}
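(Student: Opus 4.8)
The plan is to reduce everything to Theorem \ref{T:Main} by passing from the vector field $X$ to its associated $(n-1)$-form $\omega_X$, exploiting the dictionary established immediately before the statement. First I would observe that if $X=\sum_{i=1}^n P_i\frac{\partial}{\partial z_i}$ has degree $d$, then each coefficient $P_i$ is a polynomial of degree $\leq d$, so the form
$$
\omega_X=\sum_{i=1}^n (-1)^{i+1} P_i\, dz_1\wedge\ldots\wedge\widehat{dz_i}\wedge\ldots\wedge dz_n=i_X(dz_1\wedge\ldots\wedge dz_n)
$$
is a polynomial $(n-1)$-form of degree $\leq d$; that is, $\omega_X\in\Omega^{n-1}_d(K^n)$. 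This places $\omega_X$ squarely in the hypothesis of Theorem \ref{T:Main} with $r=n-1$.

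Next I would invoke the equivalence recorded just above the corollary: from the identity $\omega_X\wedge df=X(f)\,dz_1\wedge\ldots\wedge dz_n$ it follows that a polynomial $F$ is invariant for $X$ if and only if it is invariant for $\omega_X$, and a rational function $f$ is a first integral for $X$ if and only if it is a first integral for $\omega_X$. In particular, the $N+n$ invariant irreducible polynomials of $X$ provided by hypothesis are precisely $N+n$ invariant irreducible polynomials of $\omega_X$.

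Finally I would apply Theorem \ref{T:Main} to $\omega_X$ with $r=n-1$. For this value of $r$, the integer appearing in Theorem \ref{T:Main} is $\dim_{K(z^p)}(\Omega^{r+1}_{\mathcal L_{d-1}/K(z^p)})=\dim_{K(z^p)}(\Omega^{n}_{\mathcal L_{d-1}/K(z^p)})=N$, and the number of invariant polynomials required is $N+r+1=N+n$ — exactly the count in our hypothesis. Thus $\omega_X$ admits a rational first integral, and by the equivalence above this is simultaneously a rational first integral for $X$, which is what we want.

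This corollary presents no genuine obstacle; it is a direct translation of Theorem \ref{T:Main} into the language of vector fields via the rotational $\omega_X$. The only points demanding care are bookkeeping ones: confirming that the degree of $\omega_X$ coincides with the degree $d$ of $X$, and checking that both the integer $N$ and the count $N+n$ of invariant polynomials match the specialization $r=n-1$ of Theorem \ref{T:Main} on the nose.
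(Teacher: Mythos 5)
Your proposal is correct and takes essentially the same route as the paper's own proof: pass to the rotational $\omega_X$, use the equivalence $\omega_X\wedge df=X(f)\,dz_1\wedge\ldots\wedge dz_n$ to identify invariant polynomials and first integrals of $X$ with those of $\omega_X$, and apply Theorem \ref{T:Main} with $r=n-1$. The only difference is that you spell out the bookkeeping (that $\omega_X$ has degree $d$ and that $N+r+1=N+n$ under this specialization), which the paper leaves implicit.
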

\begin{proof}
Let $X=\sum_{i=1}^n P_i \frac{\partial}{\partial z_i}$ be a polynomial vector field of degree $d$. Then $F$ is $X$-invariant if and only if $F$ is $\omega_X$-invariant. By Theorem \ref{T:Main}, $ \omega_X$ has a rational first integral $f\in K(z)$, which will be also a rational first integral for $X$.  

\end{proof}

Now we will prove that if $\omega$ has a rational first integral, then $\omega$ has infinitely many invariant polynomials. We will denote the {\it greatest common factor} between two polynomials $P$, $Q\in K[z]$ by $gcd(P,Q)$. In this way, $P$ and $Q$ are relatively primes if and only if $gcd(P,Q)=1$.

\begin{lemma}
Let $f=\frac{P}{Q}\in K(z)$ be a rational function such that $df\neq 0$. Let $H\in K[z]$ be a polynomial. If  $\lambda =(HP)^p\in K[z^p]$ then $P+\lambda Q$ has a non-$\partial$-constant factor $G$ with $gcd(H,G)=1$.
\end{lemma}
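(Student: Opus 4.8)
The plan is to exploit that the hypothesis $\lambda=(HP)^p$ is not merely a $\partial$-constant but carries a factor $P^p$, which lets me factor the target polynomial rather than analyze it directly. Since $(HP)^p=H^pP^p$, I would write
$$
P+\lambda Q=P+H^pP^pQ=P\cdot\bigl(1+H^pP^{p-1}Q\bigr),
$$
and set $S:=1+H^pP^{p-1}Q$. My goal then becomes to find the required factor $G$ inside $S$ (not inside $P$), so that automatically $G\mid S\mid (P+\lambda Q)$.

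First I would show that $S$ is non-$\partial$-constant, i.e.\ $dS\neq0$. Because $H^p\in K[z^p]$ we have $d(H^p)=0$, so $dS=H^p\,d(P^{p-1}Q)$. Applying the Leibniz rule together with the characteristic-$p$ identity $d(P^{p-1})=(p-1)P^{p-2}\,dP=-P^{p-2}\,dP$ gives
$$
dS=-H^pP^{p-2}\bigl(Q\,dP-P\,dQ\bigr)=-H^pP^{p-2}Q^2\,df,
$$
where the last equality uses $Q\,dP-P\,dQ=Q^2\,df$. Since $df\neq0$ by hypothesis and $H,P,Q$ are nonzero (note $P\neq0$, as $df\neq0$), I conclude $dS\neq0$, hence $S\notin K[z^p]$ by Proposition \ref{P:Base}.

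Next I would extract $G$ and check the coprimality. As $K[z^p]$ is a ring, a product of $\partial$-constant polynomials is again a $\partial$-constant (Proposition \ref{P:Base}); since $S\notin K[z^p]$, at least one irreducible factor $G$ of $S$ must satisfy $dG\neq0$. This $G$ is the desired non-$\partial$-constant factor, and it divides $P+\lambda Q$. For the condition $\gcd(H,G)=1$, I would reduce modulo $H$: from $S=1+H\cdot(H^{p-1}P^{p-1}Q)$ we get $S\equiv1\pmod H$, so $\gcd(H,S)=1$; as $G\mid S$, this forces $\gcd(H,G)=1$.

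The step I expect to require the most care is the passage from $dS\neq0$ to the existence of a non-$\partial$-constant irreducible factor. One must argue through the ring structure of $K[z^p]$ as above, rather than grouping the non-$\partial$-constant factors of $S$ together, because a product of non-$\partial$-constant polynomials can itself be $\partial$-constant (for instance $x\cdot x^{p-1}=x^p$). Beyond that subtlety, the proof rests entirely on the clean factorization $P+\lambda Q=PS$ and the characteristic-$p$ derivative computation, both of which are routine.
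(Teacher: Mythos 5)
Your proposal is correct and follows essentially the same route as the paper: the identical factorization $P+\lambda Q=P\bigl(1+H^pP^{p-1}Q\bigr)$ and the identical computation $d\bigl(1+H^pP^{p-1}Q\bigr)=-H^pP^{p-2}Q^2\,df\neq0$. The only difference is that the paper simply takes $G=1+H^pP^{p-1}Q$ itself (the lemma only asks for a factor, not an irreducible one), whereas you additionally extract a non-$\partial$-constant irreducible factor --- a correct but unnecessary refinement here, which the paper instead performs in the subsequent Lemma \ref{L:InfiniteFactors}.
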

\begin{proof}
If $\lambda=(HP)^p$, then
$$
P+\lambda Q=P+H^pP^pQ=P(1+H^pP^{p-1}Q)
$$
Hence, if $G=1+H^pP^{p-1}Q$, then $gcd(H, G)=1$ and
$$
dG=d(1+H^pP^{p-1}Q)=H^pd(P^{p-1}Q)=H^p(P^{p-1}dQ+(p-1)P^{p-2}QdP)=
$$
$$
=H^p(P^{p-1}dQ-P^{p-2}QdP)=H^pP^{p-2}(PdQ-QdP)=-H^pP^{p-2}Q^2df\neq 0
$$

\end{proof}

\begin{lemma}\label{L:InfiniteFactors}
Let $f=\frac{P}{Q}\in K(z)$ be a rational function such that $df\neq 0$. Then there are infinitely many non-$\partial$-constant irreducible factors of polynomials $P+\lambda Q \in K[z]$, $\lambda\in K[z^p]$.
\end{lemma}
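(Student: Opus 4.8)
The plan is to argue by contradiction, using the preceding lemma as the engine that manufactures a fresh irreducible factor whenever one is demanded. Recall that lemma produces, for every choice of $H\in K[z]$ and the corresponding $\lambda=(HP)^p\in K[z^p]$, a factorization $P+\lambda Q=P\cdot G$ in which $G=1+H^pP^{p-1}Q$ is non-$\partial$-constant (because $dG\neq 0$) and satisfies $\gcd(H,G)=1$. The crucial observation is that $H$ is a free parameter: by feeding in a suitable $H$ we can always force the output $G$ to be coprime to any prescribed finite list of polynomials, which is exactly what is needed to escape any finite supply of factors.

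First I would record an elementary closure fact: every non-$\partial$-constant polynomial $G\in K[z]$ admits at least one non-$\partial$-constant irreducible factor. Indeed, by Proposition \ref{P:Base} we have $K[z^p]=\{F\in K[z]:dF=0\}$, and this set is closed under products; so if all irreducible factors of $G$ lay in $K[z^p]$, then $G$ itself would lie in $K[z^p]$, contradicting $dG\neq 0$.

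Next I would suppose, for contradiction, that the set of non-$\partial$-constant irreducible factors (taken up to multiplication by a constant in $K^*$) occurring in the family $\{P+\lambda Q:\lambda\in K[z^p]\}$ is finite, say $\{G_1,\dots,G_k\}$. Put $H=G_1\cdots G_k$ and $\lambda=(HP)^p$. The preceding lemma then gives $P+\lambda Q=P\cdot G$ with $G$ non-$\partial$-constant and $\gcd(H,G)=1$. By the closure fact, $G$ has a non-$\partial$-constant irreducible factor $G'$; since $G\mid P+\lambda Q$ and $\lambda\in K[z^p]$, this $G'$ is a non-$\partial$-constant irreducible factor of a member of the family, hence $G'=G_i$ for some $i$ and therefore $G'\mid H$. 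On the other hand $G'\mid G$ together with $\gcd(H,G)=1$ forces $\gcd(H,G')=1$, which is impossible for the non-constant polynomial $G'$. This contradiction shows the family of factors must be infinite.

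The only delicate points here are bookkeeping rather than substance: one must ensure that the factor extracted from the lemma is genuinely irreducible and non-$\partial$-constant (handled by the closure fact) and that distinctness of irreducible polynomials is measured up to scalars in $K^*$, so that $G'\mid H$ really does contradict the coprimality $\gcd(H,G')=1$. I expect no serious obstacle beyond organizing these finiteness and coprimality steps, since the genuine arithmetic content — the explicit factorization $P+\lambda Q=P(1+H^pP^{p-1}Q)$ and the verification that its second factor is non-$\partial$-constant and coprime to $H$ — has already been carried out in the preceding lemma.
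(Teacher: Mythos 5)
Your proof is correct and follows essentially the same route as the paper's: both apply the preceding lemma with $H$ taken to be the product of the finitely many known (or assumed-to-be-all) non-$\partial$-constant irreducible factors, and then extract a new non-$\partial$-constant irreducible factor of $G$ coprime to $H$. The only difference is presentational — you frame it as a contradiction while the paper frames it as an inductive construction — and your explicit ``closure fact'' merely spells out a step the paper leaves implicit.
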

\begin{proof}
Let $H_1$,..., $H_m$ be non-$\partial$-constant irreducible factors of $P+\lambda_1 Q $,..., $P+\lambda_m Q$, respectively. Define $H=H_1\cdot ...\cdot H_m$. By the above lemma, there is $\lambda\in K[z^p]$ such that $P+\lambda Q$ has a non-$\partial$-constant factor $G$ such that $gcd(H,G)=1$. Therefore there is a non-$\partial$-constant irreducible factor $H_{m+1}$ of $G$. This proves the lemma.

\end{proof}

\begin{remark}
For any field $K$, if $P$, $Q \in K[z]$ are two coprime polynomials,
$$
\lambda_1\neq \lambda_2\in K\Rightarrow gcd(P+\lambda_1Q,P+\lambda_2Q)=1
$$
This imply the existence of infinitely many factors as claimed by Lemma \ref{L:InfiniteFactors} if $K$ is an infinite field (in particular if $K$ has characteristic $0$). But this property does not work for $\lambda_1\neq \lambda_2\in K[z^p]$. For example, suppose that $K$ has characteristic $p=2$. If $P=z_1+z_2^3=z_1+z_2^2\cdot z_2$ and $Q=z_2$, then $P$ and $Q$ are irreducible polynomials with $gcd(P,Q)=1$. If $\lambda_1=z_2^2$, $\lambda_2=(z_1+z_2)^2 \in K(z^2)$,
$$
F_1=P+\lambda_1 Q=z_1
$$
and
$$
F_2=P+\lambda_2 Q=z_1(1+z_1z_2)
$$
then $gcd(F_1, F_2)=z_1$. 
\end{remark}

\begin{thm}\label{T:InfiniteInvariants}
Let $\omega\in \Omega^r_{K[z]/K}$ be a polynomial $r$-form over a field $K$. Then $\omega$ has a rational first integral if and only if it has an infinite number of invariant irreducible polynomials. %hypersurfaces $V_{\lambda}=\{f_{\lambda}=0\}$, where the index is $\lambda \in K$. %Consider the pencil $\{P-\lambda Q=0\}_{\lambda \in K^a}$.

\end{thm}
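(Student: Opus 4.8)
The statement is an equivalence, so I would prove the two implications separately, the backward one being essentially immediate and the forward one carrying the real content. For the implication ``infinitely many invariant irreducible polynomials $\Rightarrow$ rational first integral'', I would just observe that $\omega$ has a fixed degree $d$, so that $N:=N_K(n,d-1,r+1)=\dim_{K(z^p)}(\Omega^{r+1}_{\mathcal L_{d-1}/K(z^p)})$ is a fixed finite number. An infinite family of invariant irreducible polynomials contains in particular $N+r+1$ of them, and Theorem \ref{T:Main} then produces a rational first integral; nothing more is needed here.

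For the converse I would start from a rational first integral in lowest terms, $f=P/Q$ with $\gcd(P,Q)=1$ and $df\neq 0$. For every $\partial$-constant $\lambda\in K[z^p]$ one has $d\lambda=0$ by Proposition \ref{P:Base}, hence $d(f+\lambda)=df$, so $f+\lambda=F_\lambda/Q$ is again a first integral, where $F_\lambda:=P+\lambda Q$. Clearing denominators in $\omega\wedge d(F_\lambda/Q)=0$ yields $Q\,(\omega\wedge dF_\lambda)=F_\lambda\,(\omega\wedge dQ)$; since $\gcd(F_\lambda,Q)=\gcd(P,Q)=1$, this forces $F_\lambda\mid\omega\wedge dF_\lambda$, so each $F_\lambda$ with $dF_\lambda\neq0$ is invariant. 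By Lemma \ref{L:InfiniteFactors} the polynomials $F_\lambda$ have, as $\lambda$ ranges over $K[z^p]$, infinitely many distinct non-$\partial$-constant irreducible factors, and it remains to distill from this an infinite supply of \emph{invariant} ones.

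The main obstacle is exactly this last step: in characteristic $p$, invariance of $F_\lambda$ need not pass to all of its irreducible factors, because a factor occurring with multiplicity divisible by $p$ contributes nothing to $dF_\lambda$ and escapes the usual restriction/Nullstellensatz argument. I would circumvent this by a multiplicity count. Writing $F_\lambda=\prod_i G_i^{k_i}$, the logarithmic derivative $\frac{dF_\lambda}{F_\lambda}=\sum_i k_i\frac{dG_i}{G_i}$ only sees the factors with $dG_i\neq0$ and $p\nmid k_i$; since $dF_\lambda\neq0$ for all but at most one $\lambda$ (as $F_\lambda\in K[z^p]$ would force $dP=dQ=0$, hence $df=0$), at least one irreducible factor $G=G_i$ satisfies $dG\neq0$ and $p\nmid k_i$. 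For such a $G$, write $F_\lambda=G^{k}S$ with $\gcd(G,S)=1$; expanding the invariance relation $\omega\wedge dF_\lambda=F_\lambda\,\Theta_{F_\lambda}$ gives
$$
k\,S\,(\omega\wedge dG)=G\,\bigl(S\,\Theta_{F_\lambda}-\omega\wedge dS\bigr).
$$
Since $k$ is a unit in $K$ and $\gcd(G,S)=1$, this yields $G\mid\omega\wedge dG$, so $G$ is an invariant irreducible polynomial.

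Finally, to see that these good factors are infinite in number, I would note that a fixed non-$\partial$-constant irreducible $G$ divides $F_\lambda$ for at most one value of $\lambda$: if $G\mid F_\lambda$ and $G\mid F_{\lambda'}$ then $G\mid(F_\lambda-F_{\lambda'})=(\lambda-\lambda')Q$, and $\gcd(G,Q)=1$ gives $G\mid(\lambda-\lambda')$ with $\lambda-\lambda'\in K[z^p]$; but a non-$\partial$-constant irreducible polynomial cannot divide a nonzero $\partial$-constant (from $\gcd(G,dG)=1$ together with an infinite descent on the multiplicity), so $\lambda=\lambda'$. Hence the invariant irreducible factor extracted from each of the infinitely many $F_\lambda$ with $dF_\lambda\neq0$ is distinct, and $\omega$ has infinitely many invariant irreducible polynomials, completing the equivalence. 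The one delicate thread throughout is the systematic care required to avoid factors of $p$-divisible multiplicity, which is precisely where positive characteristic departs from the classical Darboux picture.
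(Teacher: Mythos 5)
Your backward implication, the invariance of each $F_\lambda=P+\lambda Q$, and the multiplicity-count extraction of an invariant irreducible factor (one with $dG\neq 0$ and $p\nmid k$) are all correct; in fact your handling of the $p$-divisible-multiplicity subtlety is more careful than the paper's own wording, which flatly asserts that \emph{all} non-$\partial$-constant irreducible factors of $F_\lambda$ are invariant. The genuine gap is in your last step. The claim that a non-$\partial$-constant irreducible polynomial cannot divide a nonzero $\partial$-constant is false: every polynomial $G$ divides $G^p\in K[z^p]$ (e.g.\ $z_1\mid z_1^p$), so no descent on multiplicity can rescue it. Consequently your assertion that a fixed non-$\partial$-constant irreducible $G$ divides $F_\lambda$ for at most one $\lambda\in K[z^p]$ also fails, and the paper itself exhibits a counterexample in the Remark preceding Theorem \ref{T:InfiniteInvariants}: for $p=2$, $P=z_1+z_2^3$, $Q=z_2$, $\lambda_1=z_2^2$, $\lambda_2=(z_1+z_2)^2$, one gets $F_{\lambda_1}=z_1$ and $F_{\lambda_2}=z_1(1+z_1z_2)$, so $z_1$ divides both. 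Thus from the infinitely many polynomials $F_\lambda$ you can a priori only extract a finite, repeating list of invariant irreducible factors, and the proof of infinitude collapses exactly where the characteristic-$p$ pathology you were guarding against actually strikes.

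The repair is to abandon pairwise distinctness and argue inductively, which is what the paper does via Lemma \ref{L:InfiniteFactors} and the unnamed lemma before it. Suppose invariant non-$\partial$-constant irreducible factors $H_1,\dots,H_m$ have already been produced, and set $H=H_1\cdots H_m$. Taking $\lambda=(HP)^p$ gives $F_\lambda=P\,(1+H^pP^{p-1}Q)$, and $G:=1+H^pP^{p-1}Q$ satisfies $\gcd(G,H)=1$, $\gcd(G,P)=1$ and $dG=-H^pP^{p-2}Q^2\,df\neq 0$. Since $dG\neq 0$, writing $G=\prod_i G_i^{k_i}$ some factor $G_{i_0}$ must have $dG_{i_0}\neq 0$ and $p\nmid k_{i_0}$ (otherwise every $d(G_i^{k_i})$ vanishes and $dG=0$); its multiplicity in $F_\lambda$ equals $k_{i_0}$ because $\gcd(G,P)=1$, so your own multiplicity argument applied to the invariance relation of $F_\lambda$ shows $G_{i_0}$ is invariant, and $\gcd(G,H)=1$ guarantees it is new. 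Iterating yields the infinite family. In short: keep your invariance analysis, but get infinitude by constructing new factors coprime to the old ones rather than from the false claim that distinct parameters $\lambda$ yield distinct factors.
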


\begin{proof}
Suppose that $f=\frac{P}{Q}$ is a rational first integral for $\omega$ and $gcd(P,Q)=1$. We have $\omega \wedge d(\frac{P}{Q})=0$ and in particular
$$
\omega \wedge (QdP-PdQ)=0
$$
which imply $Q\omega\wedge \frac{dP}{P}=\omega\wedge dQ$ and $P\omega\wedge \frac{dQ}{Q}=\omega\wedge dP$. Since $P$ and $Q$ are coprimes, then $\omega\wedge \frac{dP}{P}$ and $\omega\wedge \frac{dQ}{Q}$ are polynomial $(r+1)$-forms.% We can write
%$$
%\omega \wedge dP=P\omega\wedge \frac{dQ}{Q}
%$$
%and
%$$
%\omega \wedge dQ=Q\omega\wedge \frac{dP}{P}
%$$

Consider $F_{\lambda}:=P+\lambda Q$, where $\lambda \in K[z^p]$. Then
$$
\omega\wedge dF_{\lambda}=\omega \wedge (dP+\lambda dQ)=\omega \wedge dP+\lambda \omega \wedge dQ=\omega \wedge dP+\lambda Q\omega\wedge \frac{dP}{P}=
$$
$$
=\omega\wedge (P\frac{dP}{P}+\lambda Q\frac{dP}{P})=(P+\lambda Q)\cdot \omega\wedge \frac{dP}{P}=F_{\lambda}\cdot\omega \wedge \frac{dP}{P}=F_{\lambda}\cdot\omega \wedge \frac{dQ}{Q}
$$
%In this way
%$$
%\omega\wedge dF_{\lambda}=\omega\wedge (P\frac{dP}{P}+\lambda Q\frac{dP}{P})=(P+\lambda Q)\cdot \omega\wedge \frac{dP}{P}=F_{\lambda}\cdot\omega \wedge \frac{dP}{P}=F_{\lambda}\cdot\omega \wedge \frac{dQ}{Q}
%$$
Therefore $\omega\wedge \frac{dF_{\lambda}}{F_{\lambda}}$ is a polynomial $(r+1)$-form, that is, for every $\lambda \in K[z^p]$ the non-$\partial$-constants irreducible factors of $F_{\lambda}$ are invariant. By the above lemma, we conclude the existence of infinitely many invariant polynomials for $\omega$.

Reciprocally, if there are infinitely many invariant polynomials we can apply Theorem \ref{T:Main} to conclude the existence of rational first integral for $\omega$.

\end{proof}

%*****************************************************************************************************************

\section{Integrating Factors and Darboux Method}

\begin{definition}
Let $\omega\in \Omega^r_{K[z]/K}$ be a polynomial $r$-form. A rational function $f$ is said to be an {\it integration factor} of $\omega$ if $f\omega$ is closed, that is, $d(f\omega)=0$.
\end{definition}

The above definition of integration factor is different from those we find in the books of elementary differential equations in the plane (for example \cite{BDP}, Chapter 2, p. 99), where it is required $f\omega$ to be exact, that is, $f\omega=d\eta$. In characteristic $0$ the existence of a rational such $\eta$ is equivalent to the condition $d(f\omega)=0$ (Poincaré's Lemma), but that does not work in positive characteristic. The relationship  between closed and exact rational differential forms over fields of characteristic $p>0$ is explored in our pre-print \cite{Santos2}. 

\begin{prop}\label{P:Factor1}
Let $\omega\in \Omega^r_{K[z]/K}$ be a polynomial $r$-form. If there are invariant irreducible polynomials $F_1$,..., $F_m$ in $K[z]$ and $\partial$-constants $\lambda_1$,..., $\lambda_m$ in $K(z^p)^*$ such that
$$
\lambda_1 \Theta_{F_1}+\lambda_2 \Theta_{F_2}+...+\lambda_m \Theta_{F_m}=d\omega
$$
then there is a logarithmic $1$-form $\eta\neq 0$ such that $\omega\wedge \eta= d\omega $.
\end{prop}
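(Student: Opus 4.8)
The plan is to mimic the proof of Proposition \ref{P:Log} almost verbatim, the only difference being that the right-hand side of the cofactor relation is now $d\omega$ instead of $0$. First I would introduce the natural candidate suggested by the hypothesis, namely the logarithmic $1$-form
$$
\eta = \sum_{i=1}^{m} \lambda_i\, \frac{dF_i}{F_i}.
$$

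Next I would invoke the definition of invariance, $\omega\wedge dF_i = F_i\cdot \Theta_{F_i}$, rewritten as $\omega\wedge \frac{dF_i}{F_i} = \Theta_{F_i}$ inside $\Omega^{r+1}_{K(z)/K}$. Distributing the wedge product over the sum and using that each $\lambda_i$ is a $\partial$-constant, so that it factors through the wedge, I obtain
$$
\omega\wedge\eta = \sum_{i=1}^{m} \lambda_i\, \omega\wedge\frac{dF_i}{F_i} = \sum_{i=1}^{m} \lambda_i \Theta_{F_i} = d\omega,
$$
where the final equality is precisely the hypothesis. This establishes the desired relation $\omega\wedge\eta = d\omega$.

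It remains to check that $\eta\neq 0$. Since each $F_i$ is invariant we have $dF_i\neq 0$, so by Proposition \ref{P:Base} each $F_i$ is a non-$\partial$-constant irreducible polynomial; as the $F_i$ are pairwise distinct, they form a finite representative system of non-$\partial$-constant primes in $K[z]-K[z^p]$. Because every $\lambda_i$ lies in $K(z^p)^*$, the coefficient tuple $(\lambda_i)$ is nonzero, so the Positive Characteristic Jouanolou's Lemma (Lemma \ref{L:Logaritmicas}) guarantees that its image $\eta$ is nonzero.

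I do not expect any genuine obstacle: the computation is formal and parallels Proposition \ref{P:Log} line by line. The only point requiring care is the non-vanishing of $\eta$, which is not a mere formality but rests on the injectivity of the map in Lemma \ref{L:Logaritmicas}; this is exactly where the hypothesis $\lambda_i\in K(z^p)^*$, rather than merely $\lambda_i\in K(z^p)$, is essential.
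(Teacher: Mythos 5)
Your proposal is correct and follows essentially the same route as the paper: define $\eta=\sum_{i=1}^{m}\lambda_i\,\frac{dF_i}{F_i}$ and compute $\omega\wedge\eta=\sum_{i=1}^{m}\lambda_i\Theta_{F_i}=d\omega$ directly from the invariance relations. The only difference is that you explicitly justify $\eta\neq 0$ via the injectivity in Lemma \ref{L:Logaritmicas} (using that invariance forces $dF_i\neq 0$, so each $F_i$ is non-$\partial$-constant), whereas the paper simply asserts the non-vanishing; your added care here is sound and strengthens the write-up.
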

\begin{proof}
Consider the logarithmic $1$-form
$$
\eta=\lambda_1 \frac{dF_1}{F_1}+\lambda_2 \frac{dF_2}{F_2}+...+\lambda_m \frac{dF_m}{F_m}\neq 0
$$
Then%Obviously $\eta$ is {\it closed} ($d\eta=0$). Also $\eta$ satisfies
$$
\omega\wedge \eta=\sum_{i=1}^m\lambda_i\omega \wedge \frac{dF_i}{F_i}=\sum_{i=1}^m\lambda_i \Theta_{F_i}=  d\omega
$$
\end{proof}

\begin{prop}\label{P:Factor2}
Let $\omega\in \Omega^r_{K[z]/K}$ be a polynomial $r$-form. If there are invariant irreducible polynomials $F_1$,..., $F_m$ in $K[z]$ and constants $\delta_1$,..., $\delta_m$ in $\mathbb Q$ (or in the prime sub-field $\mathbb Z_p$ of $K$ if $char(K)=p>0$), not all zero, such that
$$
\delta_1 \Theta_{F_1}+\delta_2 \Theta_{F_2}+...+\delta_m \Theta_{F_m}= d\omega
$$
then $\omega$ has a rational (respectively polynomial)   integration factor.
\end{prop}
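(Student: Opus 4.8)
The plan is to mimic the construction of Proposition \ref{P:porco}, correcting for the nonzero right-hand side $d\omega$ by passing to a suitable power of the would-be Darboux first integral. First I would set
$$
\eta=\delta_1\frac{dF_1}{F_1}+\dots+\delta_m\frac{dF_m}{F_m},
$$
so that, exactly as in Proposition \ref{P:Factor1}, the hypothesis gives $\omega\wedge\eta=\sum_{i=1}^m\delta_i\Theta_{F_i}=d\omega$. Reading the $\delta_i$ as integers (in characteristic $p>0$ one lifts each $\delta_i\in\mathbb Z_p$ to its representative in $\{0,\dots,p-1\}$), the function $h:=F_1^{\delta_1}\cdots F_m^{\delta_m}$ is a well-defined nonzero element of $K(z)$, and from $dF_i^{\delta_i}=\delta_iF_i^{\delta_i}\frac{dF_i}{F_i}$ one gets the logarithmic derivative identity $dh=h\,\eta$.

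The candidate integrating factor is then $f:=h^{(-1)^{r+1}}$, that is, $f=h$ when $r$ is odd and $f=h^{-1}$ when $r$ is even; note $df=(-1)^{r+1}f\,\eta$, which follows from $dh=h\eta$ together with $d(h^{-1})=-h^{-1}\frac{dh}{h}$. The whole verification then rests on the single sign identity $\eta\wedge\omega=(-1)^r\,\omega\wedge\eta$, valid for the $1$-form $\eta$ and the $r$-form $\omega$. Indeed one computes
$$
d(f\omega)=df\wedge\omega+f\,d\omega=(-1)^{r+1}f\,\eta\wedge\omega+f\,d\omega,
$$
and since $\eta\wedge\omega=(-1)^r\omega\wedge\eta$ and $\omega\wedge\eta=d\omega$ this becomes
$$
(-1)^{r+1}(-1)^r f\,d\omega+f\,d\omega=-f\,d\omega+f\,d\omega=0 .
$$
Thus $f\omega$ is closed, so $f$ is the required integrating factor; it is polynomial precisely when all exponents $(-1)^{r+1}\delta_i$ are non-negative (for instance in positive characteristic with $r$ odd), which is what the ``respectively polynomial'' clause records.

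I expect the main difficulty to be bookkeeping rather than conceptual: the correct exponent is the parity-dependent $(-1)^{r+1}\delta_i$, and choosing the wrong sign (say $f=h$ irrespective of $r$) leaves the residual term $\big((-1)^r+1\big)d\omega$, which fails to vanish when $r$ is even. A secondary point worth stating explicitly is the rationality of $f$: it is automatic in the positive-characteristic setting, where the inclusion $\mathbb Z_p\hookrightarrow K$ lets us take integral exponents, whereas genuinely fractional $\mathbb Q$-coefficients in characteristic $0$ yield only a Darbouxian (multivalued) factor unless the $\delta_i$ are already integers. Finally, the assumption that the $\delta_i$ are not all zero guarantees, via the $K(z^p)$-linear independence of the $\frac{dF_i}{F_i}$ furnished by Lemma \ref{L:Logaritmicas}, that $\eta\neq0$ and hence that $f$ is non-constant, so the integrating factor produced is nontrivial.
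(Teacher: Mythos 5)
Your proof is correct and takes essentially the same route as the paper's: the paper absorbs the sign $(-1)^{r+1}$ by replacing the $\delta_i$ with $-\delta_i$ when necessary (so that $\omega\wedge\eta=(-1)^{r+1}d\omega$), whereas you keep the $\delta_i$ and invert $h$ instead, but since $F_1^{-\delta_1}\cdots F_m^{-\delta_m}=h^{-1}$ these are the same construction and the computation $d(f\omega)=df\wedge\omega+f\,d\omega=0$ is identical. Your closing caveat about genuinely fractional $\mathbb{Q}$-exponents is fair but applies equally to the paper's own proof, which simply writes $f=F_1^{\delta_1}\cdots F_m^{\delta_m}$ and calls it rational without comment; in characteristic $p$ the issue disappears, as you note, since exponents in $\mathbb{Z}_p$ can always be lifted to non-negative integers (the identity $dF^a=aF^{a-1}dF$ depends on $a$ only mod $p$), giving the polynomial factor.
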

\begin{proof}
%Up to multiplying $\omega$ by an integer, we can suppose that $\delta_1$,..., $\delta_m\in \mathbb Z$ if $char(K)=0$.
As in the previous proposition, define the logarithmic $1$-form
$$
\eta=\delta_1 \frac{dF_1}{F_1}+\delta_2 \frac{dF_2}{F_2}+...+\delta_m \frac{dF_m}{F_m}\neq 0
$$
Up to multiplying each $\delta_1$,..., $\delta_m$ by $-1$ if necessary, we can suppose that  
$$
\omega\wedge \eta=\sum_{i=1}^m\delta_i\omega \wedge \frac{dF_i}{F_i}=\sum_{i=1}^m\delta_i \Theta_{F_i}= \sigma d\omega
$$
where $ \sigma= (-1)^{r+1} $, and hence $ (-1)^{r} \omega \wedge \eta = - d \omega $.
 
Define the rational (respectively polynomial) function
$$
f=F_1^{\delta_1}F_2^{\delta_2}...F_m^{\delta_m}
$$
By differentiation we have  $df=f\eta $, and  then

$$
d(f\omega)=df \wedge \omega+ f d\omega=(-1)^{r} \omega \wedge df + f d \omega=(-1)^{r} f\omega \wedge \eta + f d \omega= - f d\omega+ f d \omega=0 
$$

\end{proof}

\begin{thm}\label{T:Main4}
Let $\omega\in \Omega_d^r(K^n)$ be a rational $r$-form of degree $d$. If $\omega$ has $N=N_K(n,d-1,r+1)=\dim_{K(z^p)}(\Omega_{\mathcal L_{d-1}/K(z^p)}^{r+1})$ invariant irreducible polynomials $F_1$, $F_2$,..., $F_N$, then there is a logarithmic $1$-form $\eta\neq 0$ such that $\omega\wedge \eta=0$ or $\omega\wedge \eta=d\omega$. If $K=\mathbb Q$ or $K=\mathbb Z_p$, $p>0$, then $\omega$ has a rational first integral or a rational integration factor.
\end{thm}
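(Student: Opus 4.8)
The plan is to re-run the dimension count behind Corollary \ref{Corol P:log}, but this time feeding $d\omega$ into the collection of cofactors instead of using an extra invariant hypersurface. Since $\omega$ has degree $d$, its differential $d\omega$ is a polynomial $(r+1)$-form of degree $\le d-1$, so $d\omega$ and the $N$ cofactors $\Theta_{F_1},\dots,\Theta_{F_N}$ all lie in $\Omega^{r+1}_{\mathcal L_{d-1}/K(z^p)}$, a $K(z^p)$-vector space of dimension exactly $N$. Hence the $N+1$ elements $\Theta_{F_1},\dots,\Theta_{F_N},d\omega$ are linearly dependent over $K(z^p)$, so there are $\mu,\lambda_1,\dots,\lambda_N\in K(z^p)$, not all zero, with
$$
\mu\, d\omega=\sum_{i=1}^N\lambda_i\,\Theta_{F_i}.
$$

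First I would settle the first assertion (arbitrary $K$) by splitting on $\mu$. If $\mu=0$, then $\sum_i\lambda_i\Theta_{F_i}=0$ with the $\lambda_i$ not all zero, and Proposition \ref{P:Log} produces the logarithmic form $\eta=\sum_i\lambda_i\frac{dF_i}{F_i}$ with $\omega\wedge\eta=0$; it is nonzero because each invariant $F_i$ satisfies $dF_i\ne0$ (so is non-$\partial$-constant) and the map of Lemma \ref{L:Logaritmicas} is injective. If $\mu\ne0$, dividing by $\mu$ gives $d\omega=\sum_i\lambda_i'\Theta_{F_i}$ with $\lambda_i'\in K(z^p)$, and Proposition \ref{P:Factor1} yields a logarithmic $\eta$ with $\omega\wedge\eta=d\omega$, again nonzero by injectivity unless every $\lambda_i'=0$, i.e. $d\omega=0$; in that degenerate case $\omega$ is already closed and the constant $1$ is an integration factor. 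This proves the dichotomy for $\eta$.

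For the second assertion I would treat the two prime fields separately. When $K=\mathbb Q$ we are in characteristic $0$, where the $\partial$-constants are just the scalars $\mathbb Q$, so $\mu,\lambda_i\in\mathbb Q$ automatically. In the case $\mu=0$ I would clear denominators to an integral relation $\sum_i\delta_i\Theta_{F_i}=0$ and apply Proposition \ref{P:porco}, obtaining the rational first integral $F_1^{\delta_1}\cdots F_N^{\delta_N}$; in the case $\mu\ne0$ I would normalize to $\sum_i\delta_i\Theta_{F_i}=d\omega$ with $\delta_i\in\mathbb Q$ and apply Proposition \ref{P:Factor2}, obtaining the rational integration factor $F_1^{\delta_1}\cdots F_N^{\delta_N}$.

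The hard part will be the remaining case $K=\mathbb Z_p$, because here the field of $\partial$-constants $\mathbb Z_p(z^p)$ is strictly larger than the prime field $\mathbb Z_p$, so the coefficients $\lambda_i\in\mathbb Z_p(z^p)$ coming from the dimension count need not be honest constants and the formal product $F_1^{\lambda_1}\cdots F_N^{\lambda_N}$ is not rational. To descend to prime-field coefficients I would exploit that $\mathbb F_p$ is perfect, so every $\partial$-constant is a $p$-th power, $\mathbb Z_p(z^p)=\mathbb Z_p(z)^p$; after clearing denominators I may assume $\lambda_i\in\mathbb Z_p[z^p]$ and expand $\lambda_i=\sum_\beta c_{i\beta}(z^\beta)^p$ with $c_{i\beta}\in\mathbb Z_p$, rewriting $\eta$ as a $\mathbb Z_p[z^p]$-combination $\sum_\beta (z^\beta)^p\,\eta_\beta$ of logarithmic forms $\eta_\beta=\sum_i c_{i\beta}\frac{dF_i}{F_i}$ with genuine prime-field coefficients. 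The closedness of $\eta$, the vanishing of all its residues (Lemma \ref{L:Residuo}, since the coefficients are $\partial$-constants), and the closed-versus-exact dictionary of \cite{Santos2} should then be used to extract from the graded pieces a surviving integer relation $\sum_i\delta_i\Theta_{F_i}=0$ or $\sum_i\delta_i\Theta_{F_i}=d\omega$ with $\delta_i\in\mathbb Z_p$, after which Propositions \ref{P:porco} and \ref{P:Factor2} finish exactly as over $\mathbb Q$. Ensuring that the Frobenius-graded pieces $\eta_\beta$ do not all conspire to cancel — so that a nontrivial prime-field relation genuinely remains — is the step I expect to be the real obstacle, and it is the part I would write out with the most care.
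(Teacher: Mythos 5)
Up to packaging, your first two paragraphs are the paper's own proof. The paper splits into the cases ``the $N$ cofactors are $K(z^p)$-linearly dependent'' (your case $\mu=0$, handled via Proposition \ref{P:Log}) and ``the cofactors form a basis of $\Omega^{r+1}_{\mathcal L_{d-1}/K(z^p)}$, hence $d\omega$ is a $K(z^p)$-combination of them'' (your case $\mu\neq 0$, handled via Proposition \ref{P:Factor1}), and it gets the nonvanishing of $\eta$ from Lemma \ref{L:Logaritmicas} just as you do; for $K=\mathbb Q$ the $\partial$-constants are just the elements of $\mathbb Q$, so clearing denominators and invoking Propositions \ref{P:porco} and \ref{P:Factor2} works exactly as you say. (You and the paper even share the same blind spot: if $d\omega=0$ and the cofactors happen to be independent, neither argument produces a \emph{nonzero} $\eta$; your remark that $1$ is then an integration factor rescues the final assertion but not the stated dichotomy. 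This is a defect of the theorem's formulation as much as of either proof.)

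The genuine gap is the one you flag yourself: for $K=\mathbb Z_p$ your third paragraph is a strategy, not a proof, and the obstacle you anticipate is real, not a detail to be written out with care. A relation $\sum_i \lambda_i\Theta_{F_i}=0$ (or $=d\omega$) with $\lambda_i\in\mathbb Z_p(z^p)$ does not decompose into prime-field relations by sorting Frobenius-graded pieces, because the cofactors themselves have nonconstant polynomial coefficients, so the graded pieces mix: already $\Theta_1=dx\wedge dy$ and $\Theta_2=x^p\,dx\wedge dy$ satisfy $x^p\Theta_1-\Theta_2=0$ while no nontrivial $\mathbb Z_p$-combination of them vanishes. So your proposal does not establish the last sentence of the theorem when $K=\mathbb Z_p$. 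You should know, however, that the paper's proof is no better at this exact point: in its Case 1 it asserts that the dependence coefficients $\lambda_1,\dots,\lambda_N$ may be taken ``in $K$'', and in its Case 2 it forms $f=F_1^{\lambda_1}\cdots F_N^{\lambda_N}$ even though the $\lambda_i$ were only produced in $K(z^p)$ --- that is, it performs, without justification, precisely the descent from $K(z^p)$ to the prime field whose difficulty you identified. So here you have not missed an idea that the paper supplies; you have located a gap in the paper's own argument, and closing it (or refuting the claim) would require work that neither you nor the authors provide.
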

\begin{proof} The cofactors associated to the invariant polynomials are differential $(r+1)$-forms of degree lass than or equal to $d-1$. By the natural inclusion ${\Omega}_{d-1}^{r+1}(K^n)\rightarrow\Omega_{\mathcal L_{d-1}/K(z^p)}^{r+1}$ and since we have $N=\dim_{K(z^p)}(\Omega_{\mathcal L_{d-1}/K(z^p)}^{r+1})$ cofactors associated to the polynomials $F_i$, $i=1$,..., $N$, then  that cofactors are linearly dependent or form a base to the $K(z^p)$-vector space $\Omega_{\mathcal L_{d-1}/K(z^p)}^{r+1}$. Hence there are two possibilities:

{\it Case $1$. The cofactors associated to the polynomials $F_i$, $i=1$,..., $N$, are $K(z^p)$-linearly dependent}.
Then there are $\partial$-constants $\lambda_1$,..., $\lambda_N$ in $K$, not all zero, such that
$$
\lambda_1 \Theta_{F_1}+\lambda_2 \Theta_{F_2}+...+\lambda_N \Theta_{F_N}=0
$$
hence for the logarithmic $1$-form
$$
\eta=\sum_{i=1}^N \lambda_i \frac{dF_i}{F_i}
$$
as in  Proposition  \ref{P:porco},  we have $\omega\wedge \eta=0$, and if $K=\mathbb Q$ or $K=\mathbb Z$, then $f=F_1^{\lambda_1}...F_N^{\lambda_N} $ defines a first integral for $\omega$.

{\it Case $2$. The cofactors associated to the polynomials $F_i$, $i=1$,..., $N$, form a base to the $K(z^p)$-vector space $\Omega_{\mathcal L_{d-1}/K(z^p)}^{r+1}$}. Then, as in  Proposition \ref{P:Factor2}, the $(r+1)$-form $d\omega$ can be written as a $K(z^p)$-linear combination of the cofactors $\Theta_{F_i}$ and we obtain a logarithmic $1$-form $\eta$ such that $\omega\wedge \eta=d\omega$, and if $K=\mathbb Q$ or $K=\mathbb Z$, then $f=F_1^{\lambda_1}...F_N^{\lambda_N}$ defines an integrating factor for $\omega$.

\end{proof}

\begin{example}[Lotka-Volterra Equation]
Consider the $1$-form
$$
\omega=y(\gamma - \delta x)dx+x(\alpha-\beta y)dy
$$
We can easily see that the polynomials $F=x$ and $G=y$ are invariant by $\omega$, and their cofactors are
$$
\Theta_x=\omega\wedge \frac{dx}{x}=(\beta y-\alpha )dx\wedge dy
$$
and
$$
\Theta_y=\omega\wedge \frac{dy}{y}=(\gamma y-\delta )dx\wedge dy
$$
Taking the differential of $\omega$, we have
$$
d\omega=[(\alpha - \beta y)-(\gamma - \delta x)]dx\wedge dy=-\Theta_x-\Theta_y
$$
and therefore
$$
d\omega = \omega \wedge (-\frac{dx}{x}-\frac{dy}{y})
$$
and then $(xy)^{-1}$ is a rational  integration factor for $\omega$.
% (in case of characteristic $p=2$ it's a polynomial factor of integration).
\end{example}

%\begin{prop}
%Let $\omega\in \Omega^r_{K[z]/K}$ be a polynomial $r$-form. If there are  rational closed $1$-forms $\eta_1, \eta_2,...,\eta_{r+1}$ such that for each $i\in \{1,2,...,r+1\}$
%$$
%d\omega=\omega\wedge \eta_i
%$$
 %then $\omega$ admits a rational integration factor.
%\end{prop}
%\begin{proof}
%For each $i=1,2,...,r  $ define  the $1$-forms $\gamma_i=\eta_i-\eta_{r+1}$  and then  $\omega\wedge \gamma_i=0$,  that is, $\gamma_i  \in \mathcal E^*(\omega) $. By Lemma \ref{L:Base} there exists $ R \in K(z)$ such that
 %$ \omega= R \gamma_1\wedge \gamma_2\wedge...\wedge\gamma_r$ and then 
  %$d\left( \frac{1}{R}\omega\right)=0$.
 
%\end{proof}

\begin{thm}[Theorem B]\label{T:Main2}
Let $\omega\in \Omega^r_{K[z]/K}$ be a polynomial $r$-form of degree $d$ and $N=N_K(n,d-1,r+1)=\dim_{K(z^p)}(\Omega_{\mathcal L_{d-1}/K(z^p)}^{r+1})$. If $\omega$ admits $N+r$ invariant irreducible polynomials, then $\omega$ has a rational first integral or a rational integration factor.

\end{thm}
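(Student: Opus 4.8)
The plan is to follow the structure of the proof of Theorem \ref{T:Main}, but to produce only $r$ tangent logarithmic $1$-forms instead of $r+1$, and then to split into two cases according to whether their exterior product vanishes. First I would set $V=\Omega^{r+1}_{\mathcal L_{d-1}/K(z^p)}$, so that $\dim_{K(z^p)}V=N$, and consider the cofactors $\Theta_{F_1},\dots,\Theta_{F_{N+r}}\in\Omega^{r+1}_{d-1}(K^n)\subset V$ of the $N+r$ invariant irreducible polynomials. The $K(z^p)$-linear map $\Phi\colon K(z^p)^{N+r}\to V$ given by $(\lambda_i)\mapsto\sum_i\lambda_i\Theta_{F_i}$ has rank at most $N$, so $\dim_{K(z^p)}\ker\Phi\geq r$. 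For each $(\lambda_i)\in\ker\Phi$ the closed logarithmic $1$-form $\eta=\sum_i\lambda_i\,dF_i/F_i$ satisfies $\omega\wedge\eta=\sum_i\lambda_i\Theta_{F_i}=0$, hence $\eta\in\mathcal E^*(\omega)$; by Lemma \ref{L:Logaritmicas} the assignment $(\lambda_i)\mapsto\eta$ is injective (the $F_i$ being distinct non-$\partial$-constant irreducibles), so I obtain a $K(z^p)$-subspace of $\mathcal E^*(\omega)$ of dimension at least $r$ consisting of closed logarithmic $1$-forms, out of which I choose $K(z^p)$-linearly independent $\eta_1,\dots,\eta_r$.

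With $\tau_1=\eta_1\wedge\cdots\wedge\eta_r$ in hand I would distinguish two cases. If $\tau_1=0$, then Lemma \ref{L:Dependence} applies directly and yields a rational first integral for $\omega$. If $\tau_1\neq 0$, then $\eta_1,\dots,\eta_r$ are $K(z)$-linearly independent elements of $\mathcal E^*(\omega)$, so Lemma \ref{L:Base} provides $h\in K(z)^{*}$ with $\omega=h\,\tau_1$. Since each $\eta_i$ is closed, $\tau_1$ is closed, and therefore the rational function $f=1/h$ satisfies $d(f\omega)=d\tau_1=0$; that is, $f$ is a rational integration factor for $\omega$. In either case the conclusion of the theorem holds.

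The main point to get right, and the reason $N+r$ invariants suffice over an arbitrary field, is the treatment of the case $\tau_1\neq 0$. Theorem \ref{T:Main} needs one more form $\eta_{r+1}$ together with Lemma \ref{L:Independence} precisely to rule out this branch and force a first integral; here I instead exploit it, observing that the factorization $\omega=h\,\tau_1$ through a closed form immediately exhibits $1/h$ as an integration factor. This is what lets me avoid the restriction to $K=\mathbb Q$ or $K=\mathbb Z_p$ present in Theorem \ref{T:Main4}, where the integration factor was produced from a $\mathbb Z_p$-linear relation among the cofactors and $d\omega$. The only subtleties to verify are that $h$ is a genuine nonzero rational function, which follows from $\omega\neq 0$ and $\tau_1\neq 0$ so that $1/h\in K(z)$, and that $\tau_1$, being a wedge of closed $1$-forms, is indeed closed.
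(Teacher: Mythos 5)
Your proof is correct and follows essentially the same route as the paper's: produce $r$ closed logarithmic $1$-forms in $\mathcal E^*(\omega)$ that are $K(z^p)$-linearly independent (via Lemma \ref{L:Logaritmicas}), then apply Lemma \ref{L:Dependence} when $\tau_1=0$ and Lemma \ref{L:Base} plus closedness of $\tau_1$ to get the integration factor $1/h$ when $\tau_1\neq 0$. The only cosmetic difference is that you obtain the $r$ independent forms from a rank--nullity argument on the cofactor map, while the paper uses a staircase selection of relations $\sum_{j=s}^{s+N}\lambda_j^s\Theta_{F_j}=0$ with $\lambda_s^s\neq 0$ for $s=1,\dots,r$.
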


\begin{proof}
Suppose that the polynomial $r$-form $\omega$ admits $N+r$ invariant irreducible polynomials $F_1$,..., $F_{N+r}$. Then, for each $i\in \{1,2,...,N+r\}$, there exists $\Theta_{F_i}\in\Omega_{d-1}^{r+1}(K^n)$ such that
$$
\omega\wedge dF_i=F_i\cdot \Theta_{F_i}
$$

Since $\dim_{K(z^p)}(\Omega_{\mathcal L_{d-1}/K(z^p)}^{r+1})=N$, for each $s=1$,..., $r$, we can choose $\lambda_s^s$,..., $\lambda_{s+N}^s\in K[z^p]$, not all zero, such that $\lambda_s^s\neq 0$ and% for $s<r$, $\lambda_{N+r}^{r}\neq 0$ and
$$
\sum_{j=s}^{s+N} \lambda_j^s\cdot \Theta_{F_j}=0
$$

For each $s=1$,..., $r$, we can define the logarithmic $1$-form
$$
\eta_s=\sum_{j=s}^{s+N} \lambda_j^s\cdot \frac{dF_j}{F_j}
$$
and then $\omega\wedge \eta_s=0$. Define
$$
\tau=\eta_1\wedge ... \wedge \eta_r
$$
By Lemma \ref{L:Logaritmicas}, $\eta_1, ...,\eta_r$ are linearly independent over $ K(z^p)$ and: 

\begin{itemize}
\item[(i)] if $\tau=0$ then, by Lemma \ref{L:Dependence}, we obtain a rational first integral for $\omega$;

\item[(ii)] if $\tau\neq 0$ then,  by Lemma \ref{L:Base}, there is a rational function $h$ such that $\omega=h\cdot\tau$; therefore, since $\tau$ is closed, with $f=\dfrac{1}{h}$, we have
$$
d(f\cdot\omega)=d\tau=0 ,
$$
that is,  $f$ is an integration factor for $\omega$.

\end{itemize}

\end{proof}

\begin{remark}
If $\omega\in \Omega^r_d(K^n)$ has a rational first integral, then $\omega$ admits infinitely many invariant polynomials, so $\omega$ also has a rational integration factor. 
\end{remark}

%%%%%%%%%%%%%%%%%%%%%%%%%%%%%%%%%%%%%%%%%%%%%%%%%%%%%%

\bibliographystyle{amsplain}

\begin{thebibliography}{9}
\frenchspacing

\bibitem{BDP}
{\sc W. E. Boyce, R. C. DiPrima},
\emph{Elementary Differential Equations and Boundary Value Problems.}
10ª ed., Wiley, (2012).

\bibitem{Brion}
{\sc M. Brion, S. Kumar},
\emph{Frobenius Splitting Methods in Geometry and Representation Theory.}
Progress in Mathematics {\bf 231}, Birkhäuser, (2005).

\bibitem{B}
{\sc M. Brunella, M. Nicolau},
\emph{Sur les Hypersurfaces Solutions des Équations de Pfaff.}
C. R. Acad. Sci. Paris, {\bf 329}, 793-795, (1999).

\bibitem{Correa}
{\sc M. Corrêa Jr., L. G. Maza, M. G. Soares},
\emph{Algebraic Integrability of Polynomial Differential $r$-forms.}
Journey of Pure and Applied Algebra, {\bf 215}, 2290-2294, (2011).

\bibitem{Darboux}
{\sc G. Darboux},
\emph{Memoire sur les équations différentielles algébriques du premier ordre et du premier degré (Mélanges).}
Bull. Sci. Math., {\bf 2}, 60-96,
123-144, 151-200, (1878).

\bibitem{Ghys}
{\sc E. Ghys},
\emph{Á propos d’un théorème de J.-P. Jouanolou concernant les feuilles fermées des feuilletages holomorphes.}
Rend. Circ. Mat. Palermo, {\bf 49},  175–180, (2000).


\bibitem{Godbillon}
{\sc G. Godbillon},
\emph{Géométrie	Differentielle et Mécanique Analytique.}
Herman, Paris, (1969).

%\bibitem{Hartshorne}
%{\sc R. Hartshorne},
%\emph{On the de Rham cohomology of algebraic varietes}. Publications mathématiques de l'I.H.É.S., {\bf 45}, 5–99, (1975).

\bibitem{Hewitt}
{\sc C.G. Hewitt},
\emph{Algebraic invariant curves in cosmological dynamical systems and exact solutions.}
Gen. Relativity Gravitation, {\bf 23 (12)}, 1363–1383, (1991).

\bibitem{Jouanolou}
{\sc J-P. Joaunolou},
\emph{Équations de Pfaff Algébriques.}
Lecture Notes in Math., {\bf 708}, Springer, (1979)

%\bibitem{Jouanolou2}
%{\sc J-P. Joaunolou},
%\emph{Hypersurfaces Solutions d'une Équation de Pfaff Analytique.}
%Mathematische Annalen, {\bf 232}, 239-246, (1978)

\bibitem{Lang}
{\sc S. Lang},
\emph{Algebra.}
Graduate Texts in Mathematics {\bf 211}, 3ª ed., Springer, (2002).

\bibitem{Liu}
{\sc Q. Liu},
\emph{Algebraic Geometry and Arithmetic Curves.}
Oxford Graduate Texts in Mathematics {\bf 6}, Oxford University Press, (2002).

\bibitem{Llibre1}
{\sc J. Llibre, C. Valls},
\emph{Integrability of the Bianchi IX system.}
J. Math. Phys., {\bf 46} (072901),  1–13, (2005).

\bibitem{Llibre2}
{\sc J. Llibre, C. Valls},
\emph{On the integrability of the Einstein–Yang–Mills equations.}
J. Math. Anal. Appl., {\bf 336}, 1203–1230, (2007).

%\bibitem{Lint}
%{\sc J. H. van Lint},
%\emph{Introduction to Coding Theory.}
%Graduate Texts in Mathematics {\bf 86}, 3ª ed., Springer, (1991).


%\bibitem{MP}
%{\sc Y. Miyaoka, T. Peternell},
%\emph{Geometry of Higher Dimensional Algebraic Varieties.}
%DMV Seminar {\bf 26}, Birkhauser, (1997).

\bibitem{JVP}
{\sc J.V. Pereira},
\emph{Hypersurfaces Invariant by Positive Characteristic Vector Fields.}
Journal of Pure and Applied Algebra, {\bf 171}, 295-301, (2001).

%\bibitem{JVP2}
%{\sc J.V. Pereira},
%\emph{Integrabilidade de Equações Diferenciais no Plano Complexo.}
%IMPA, (2001).

\bibitem{JVP3}
{\sc J.V. Pereira},
\emph{Integrabilidade de Folheações Holomorfas.}
Publicações Matemáticas do IMPA, (2003).


\bibitem{Poincare}
{\sc H. Poincaré},
\emph{Sur l'integration algébrique des équations différentiellé du premier ordre et du premier degré.}
Rend. Circ. Mat. Palermo, {\bf 11},  161-191, (1891).

\bibitem{Santos}
{\sc E. A. Santos, S. Rodrigues},
\emph{Darboux-Jouanolou Integrability Over Arbitrary Fields.}
Journal of Pure and Applied Algebra, {\bf 224}, (2020).

\bibitem{Santos2}
{\sc E. A. Santos, S. Rodrigues},
\emph{Positive Characteristic Poincaré Lemma.}
Pre-print.

%\bibitem{Scardua}
%{\sc B. Scárdua},
%\emph{Differential Algebra and Liouvillian first integrals of foliations.}
%Journal of Pure and Applied Algebra, {\bf 215}, 764-788, (2011).


%\bibitem{Sweedler}
%{\sc M. E. Sweedler},
%\emph{Positive characteristic calculus and icebergs.}
%In: Chudnovsky D.V., Chudnovsky G.V., Cohn H., Nathanson M.B. (eds) Number Theory. Lecture Notes in Mathematics, {\bf 1383}, Springer, Berlin, Heidelberg (1989).


\bibitem{Tu}
{\sc L. W. Tu},
\emph{An Introduction to Manifolds.}
Universitext, 2ª ed., Springer, (2010).

\bibitem{Valls}
{\sc C. Valls},
\emph{Rikitake system: Analytic and Darbouxian integrals.}
Proc. Roy. Soc. Edinburgh Sect. A, {\bf 135}, 1309–1326, (2005).


%\bibitem{Warner}
%{\sc F. W. Warner},
%\emph{Foundations of Differentiable Manifolds and Lie Groups.} Graduate Texts in Mathematics {\bf 94}, 2ª ed.,
%Springer, (1996).

\bibitem{Zhang}
{\sc X. Zhang},
\emph{Exponential factors and Darbouxian first integrals of the Lorenz system.}
J. Math. Phys., {\bf 43}, 4987–5001, (2002).


%\bibitem{BPV}
%{\sc W. Barth, C. Peters and A. Van de Ven},
%\emph{Compact complex surfaces.}
%Springer-Verlag, 2003.


%\bibitem{Brunella}
%{\sc M. Brunella},
%\emph{Birational geometry of foliations.}
%Monograf\'{\i}as de Matem\'{a}tica. Instituto de Matem\'{a}tica Pura e Aplicada (IMPA), Rio de Janeiro, 2000. 138 pp.


%\bibitem{Brunella2}
%{\sc M. Brunella},
%\emph{Minimal Models of Foliated Algebraic Surfaces.}
%Bull. Soc. math. France, {\bf 127}, (1999), 289-305.

%\bibitem{Brunella3}
%{\sc M. Brunella},
%\emph{Some remarks on indices of holomorphic vector fields.}
%Publ. Mat., {\bf 41}, (1997), 527-544.

%\bibitem{C}
%{\sc C. Camacho},
%\emph{Quadratic forms and holomorphic foliations on singular surfaces.}
%Math. Ann., {\bf 282}, (1988), 177-184.

%\bibitem{CL}
%{\sc C. Camacho, A. Lins-Neto},
%\emph{Teoria geométrica das folheações.}
%Projeto Euclides. Instituto de Matem\'{a}tica Pura e Aplicada (IMPA), Rio de Janeiro, 1979.

%\bibitem{CSS}
%{\sc C. Camacho, P. Sad},
%\emph{Invariant varieties throught singularities of holomorphic vector fields.}
%Ann. of Math., {\bf 115}, (1982), 579-595.


%\bibitem{CS}
%{\sc C. Camacho, P. Sad},
%\emph{Pontos singulares de equações diferenciais analíticas.}
%Monograf\'{\i}as de Matem\'{a}tica. Instituto de Matem\'{a}tica Pura e Aplicada (IMPA), Rio de Janeiro, 1987.

%\bibitem{Gaeljvp}
%{\sc G. Cousin, J.V. Pereira},
%\emph{Transversely affine foliations on projective manifolds}.
%To appear in Mathematical Research Letters (2013).

%\bibitem{Cano}
%{\sc J. Cano},
%\emph{Construction of invariant curves for singular holomorphic vector fields.}
%Am. Math. Soc., {\bf 125.9}, (1997), 2649-2650.

%\bibitem{DOT}
%{\sc G. Dloussky, K. Oeljeklaus, M. Toma},
%\emph{Surfaces de la classe $VII_0$ admettant
%un champ de vecteurs.}
%Comment. Math. Helv.,  \textbf{75} (2000), no. 2, 255-270.

%\bibitem{Uber}
%{\sc H. Grauert},
%\emph{Über modifikationen und exzeptionelle analytische Mengen.}
%Math. Ann., {\bf 146}, (1962), 331-368.


%\bibitem{Grothendieck}
%{\sc A. Grothendieck},
%\emph{Sur la classification des fibrés holomorphes sur la sphère de Riemann.}
%Am. J. M., {\bf 79}, (1957), 121-138.

%\bibitem{Adolfo}
%{\sc A. Guillot},
%\emph{Vector fields, separatrices and Kato surfaces.}
%Annales de l'institut Fourier, \textbf{64.3} (2014): 1331-1361. <http://eudml.org/doc/275572>.


%\bibitem{Hartshorne}
%{\sc R. Hartshorne},
%\emph{Algebraic Geometry}. Graduate Texts in Mathematics {\bf 52}, Springer (1998).

%\bibitem{Hartshorne2}
%{\sc R. Hartshorne},
%\emph{Ample Subvarieties of Algebraic Varieties}.
%Lecture Notes in Mathematics {\bf 156}, Springer-Verlag (1970).

%\bibitem{Daniel H.}
%{\sc D. Huybrechts},
%\emph{Complex Geometry: An Introducion}. Springer (2005).

%\bibitem{Lazarsfeld}
%{\sc R. Lazarsfeld},
%\emph{Positivity in Algebraic Geometry}.
%University of Michigan (2001).

%\bibitem{Lins-Neto-0} {\sc A. Lins-Neto}, \emph{Algebraic solutions of polynomial differential equations and foliations in dimension two.} Lectures Notes in Mathematics, {\bf 1345}, Springer-Verlag, New York (1988) 192-232.

%\bibitem{Lins-Neto} {\sc A. Lins-Neto}, \emph{Construction of singular holomorphic vector fields and foliations in dimension two.} J. Differential Geometry, {\bf 26} (1987) 1-31.

%\bibitem{Lins-Neto2} {\sc A. Lins-Neto, B. Scárdua}, \emph{Introdução à Teoria das Folheações Algébricas Complexas.} Monograf\'{\i}as de Matem\'{a}tica. Instituto de Matem\'{a}tica Pura e Aplicada (IMPA), Rio de Janeiro, 2011.

%\bibitem{Frankpseudo}
%{\sc F. Loray}, \emph{Pseudo-groupe d'une singularit\'{e} de feuilletage holomorphe en dimension deux.}
%HAL:hal-00016434, version 1. \url{http://hal.archives-ouvertes.fr/ccsd-00016434}.

%\bibitem{Massey} {\sc W. S. Massey}, \emph{Algebraic topology, an introduction.} Graduate Texts in Mathematics {\bf 56}, Springer (1977).

%\bibitem{MM}
%{\sc J.-F. Mattei, R. Moussu}, \emph{Holonomie et intégrales premières.}
%Ann. Sci., ENS, \textbf{13} (1980), 469-523.

%\bibitem{Neeman}
%{\sc  A. Neeman},
%\emph{Ueda Theory: Theorems and problems.}
%AMS Memoires, \textbf{415} (1989).

%\bibitem{Ohsawa}
%{\sc T. Ohsawa},
%\emph{Vanishing theorems on complete Kahler manifolds.}
%PubL RIMS, Kyoto Univ., \textbf{20} (1984), 21-38.

%\bibitem{jvpCousin}
%{\sc J. V. Pereira, G. Cousin},
%\emph{Transversely affine foliations on projective manifolds.}
%Mathematical Research Letters, \textbf{21} (2014), no. 5, 989-1014.

%\bibitem{jvpJAG}
%{\sc J. V. Pereira},
%\emph{Fibrations, divisors and transcendental leaves. With an appendix by Laurent Meersseman.}
%J. Algebraic Geom., \textbf{15} (2006), no. 1, 87-110.

%\bibitem{jvpSanchez}
%{\sc J.V. Pereira, P.F Sanchez},
%\emph {Transformation groups of holomorphic foliations.} Commun. Anal. Geom. \textbf{10}(5), 1115-1123 (2002).

%\bibitem{Santos}
%{\sc E. A. Santos},
%\emph{Positive rational nodal leaves in surfaces.}
%Braz Math Soc, New Series (2016). doi:10.1007/s00574-016-0011-y

%\bibitem{Sebastiani}
%{\sc M. Sebastiani},
%\emph{Sur l'existence de séparatrices locales des feuilletages des surfaces.}
%An. Acad. Bras. Ci., \textbf{69} (1997), 159-162.

%\bibitem{Soares-Mol}
%{\sc M. G. Soares, R. S. Mol},
%\emph{Índices de Campos Holomorfos e Aplicações.}
%Monograf\'{\i}as de Matem\'{a}tica. Instituto de Matem\'{a}tica Pura e Aplicada (IMPA), Rio de Janeiro, 2001.

%\bibitem{Suzuki}
%{\sc O. Suzuki},
%\emph{Neighborhoods of a Compact Non-Singular Algebraic Curve Imbedded in a 2-Dimensional Complex Manifold.}
%Publ RIMS, Kyoto Univ., \textbf{11} (1975), 185-199.

%\bibitem{Toma}
%{\sc M. Toma},
%\emph{A short proof of a theorem of Camacho and Sad.}
%Enseign. Math., \textbf{45} (1999), 311-316.

%\bibitem{Ueda}
%{\sc T. Ueda},
%\emph{Neighborhood of a Rational Curve with a Node.}
%Publ RIMS, Kyoto Univ., \textbf{27} (1991), 681-693.

%\bibitem{Voisin}
%{\sc C. Voisin},
%\emph{Hodge Theory and Complex Algebraic Geometry, I}. Cambridge studies in advanced mathematics {\bf 76}, Springer (2002).
\end{thebibliography}

\end{document}